\numberwithin{equation}{section}
\theoremstyle{plain} 
\newtheorem{theorem}{Theorem}[section]
\newtheorem{lemma}[theorem]{Lemma}
\newtheorem{proposition}[theorem]{Proposition}
\newtheorem{corollary}[theorem]{Corollary}
\theoremstyle{remark}
\newtheorem{definition}[theorem]{Definition}
\newtheorem{dfn}[theorem]{Definition}
\newtheorem{example}[theorem]{Example}
\newtheorem{remark}[theorem]{Remark}
\newcommand{\bthe}{\begin{theorem}}
\newcommand{\ethe}{\end{theorem}}
\newcommand{\ben}{\begin{enumerate}}
\newcommand{\een}{\end{enumerate}}
\newcommand{\bit}{\begin{itemize}}
\newcommand{\eit}{\end{itemize}}
\newcommand{\beq}{\begin{equation}}
\newcommand{\eeq}{\end{equation}}
\newcommand{\ble}{\begin{lemma}}
\newcommand{\ele}{\end{lemma}}
\newcommand{\bde}{\begin{definition}\rm}
\newcommand{\ede}{\halmos\end{definition}}
\newcommand{\bco}{\begin{corollary}}
\newcommand{\eco}{\end{corollary}}
\newcommand{\bpr}{\begin{proposition}}
\newcommand{\epr}{\end{proposition}}
\newcommand{\brem}{\begin{remark}\rm}
\newcommand{\erem}{\end{remark}}
\newcommand{\bproof}{\begin{proof}}
\newcommand{\eproof}{\end{proof}}
\newcommand{\bexam}{\begin{example}\rm}
\newcommand{\eexam}{\end{example}}
\newcommand{\bfi}{\begin{fig}}
\newcommand{\efi}{\end{fig}}
\newcommand{\btab}{\begin{tab}}
\newcommand{\etab}{\end{tab}}
\newcommand{\beao}{\begin{eqnarray*}}
\newcommand{\eeao}{\end{eqnarray*}\noindent}
\newcommand{\balo}{\begin{align*}}
\newcommand{\ealo}{\end{align*}}
\newcommand{\balm}{\begin{align}}
\newcommand{\ealm}{\end{align}\noindent}
\newcommand{\beam}{\begin{eqnarray}}
\newcommand{\eeam}{\end{eqnarray}\noindent}
\newcommand{\barr}{\begin{array}}
\newcommand{\earr}{\end{array}}
\newcommand{\C}{\mathbb{C}}
\newcommand{\E}{\mathbb{E}}
\newcommand{\M}{\mathbb{M}}
\renewcommand\P{\mathbb{P}}
\newcommand{\R}{\mathbb{R}}
\newcommand{\VF}[1]{{\color{blue} #1}}
\def\MRV{\mathcal{MRV}}
\def\RV{\mathcal{RV}}
\def\binfty{\boldsymbol \infty}
\def\cB{\mathcal{B}}
\def\cR{\mathcal{R}}
\def\bzero{\boldsymbol 0}
\def\bone{\boldsymbol 1}
\def\bX{\boldsymbol X}
\def\bY{\boldsymbol Y}
\def\bZ{\boldsymbol Z}
\def\bh{\boldsymbol h}
\def\bx{\boldsymbol x}
\def\by{\boldsymbol y}
\def\bz{\boldsymbol z}
\def\be{\boldsymbol e}
\def\bl{\boldsymbol l}
\newcommand{\RVGC}{\text{RVGC}}
\newcommand{\ov}{\overline}
\newcommand{\vague}{\stackrel{\lower0.2ex\hbox{$\scriptscriptstyle
                    \it{v} $}}{\rightarrow}}
\newcommand{\weak}{\stackrel{\lower0.2ex\hbox{$\scriptscriptstyle
                    \it{w} $}}{\rightarrow}}
\newcommand{\what}{\stackrel{\lower0.2ex\hbox{$\scriptscriptstyle
                    \it{\hat{w}} $}}{\rightarrow}}
\newcommand{\eqdis}{\stackrel{\lower0.2ex\hbox{$\scriptscriptstyle
                    \mathrm{d}$}}{=}}
\newcommand{\distr}{\stackrel{\lower0.2ex\hbox{$\scriptscriptstyle
                    \it{d} $}}{\rightarrow}}
\definecolor{darkgreen}{RGB}{0,139,0}
\newcommand{\DAS}[1]{{\color{red} #1}}
\begin{document}

\begin{frontmatter}
\title{On heavy-tailed risks under Gaussian copula:\\ the effects of marginal transformation}
\runtitle{Heavy-tailed risks and Gaussian copula}

\begin{aug}
  \author[A]{\fnms{Bikramjit} \snm{Das}\ead[label=e1]{bikram@sutd.edu.sg}\orcid{0000-0002-6172-8228}}
    \and
   \author[B]{\fnms{Vicky} \snm{Fasen-Hartmann}\ead[label=e2]{vicky.fasen@kit.edu}\orcid{0000-0002-5758-1999}}
 \address[A]{Engineering Systems and Design, Singapore University of Technology and Design  \printead[presep={,\ }]{e1}}

   \address[B]{Institute of Stochastics, Karlsruhe Institute of Technology\printead[presep={,\ }]{e2}}


  \runauthor{B. Das and  V. Fasen-Hartmann}
\end{aug}

\begin{abstract}
In this paper, we compute multivariate tail risk probabilities where the marginal risks are heavy-tailed and the dependence structure is  a Gaussian copula.  The marginal heavy-tailed risks  are modeled using regular variation which leads to a few interesting consequences.
First, as the threshold increases, we note that the rate of decay of probabilities of tail sets vary depending on the type of tail sets considered and the Gaussian correlation matrix. Second, we discover that although any multivariate model with a Gaussian copula admits the so called asymptotic tail independence property, the joint tail behavior under heavier tailed marginal variables is structurally distinct from that under Gaussian marginal variables.  The results obtained are illustrated using examples and simulations.

\end{abstract}

\begin{keyword}[class=AMS]
\kwd[Primary ]{60G70}
\kwd{62H05}
\kwd[; Secondary ]{91G70}
\kwd{62G32} 
\end{keyword}

\begin{keyword}
\kwd{asymptotic tail independence}
\kwd{Gaussian copula}
\kwd{heavy-tails}
\kwd{multivariate regular variation}
\kwd{multivariate risk}
\kwd{tail risk}
\end{keyword}


\end{frontmatter}

\section{Introduction}\label{sec:intro} 

Our  interest is in the asymptotic probabilities of joint tail risk events under Gaussian copula, and its influence on distributions with both heavy and light tailed marginals.
Classically, risk variables have been modeled using the multivariate normal  distribution, nevertheless, in practice, these risks often tend to have tails relatively heavier than normal distributions as observed in many finance, insurance and environmental applications \citep{embrechts_etal:2009,resnickbook:2007}.  The popularity of \emph{copula} modeling in the past few decades has facilitated the use of the Gaussian dependence structure along with any choice of marginal distribution, in particular, heavy-tailed distributions \citep{mcneil-frey-embrechts-05, ibragimov:prokhorov:2017}.

Tail exceedance probabilities have been studied under a variety of model assumptions in the literature. For the popular multivariate normal distributions, such approximations are natural extensions of the Mill's ratio and have a long history; see \cite{moran:1956, dunnett:sobel:1956, savage:1962, elnaggar:mukherjea:1999, dai:mukherjea:2001, hashorva:2005, hashorva:husler:2002, hashorva:husler:2003}. 
In case of heavy-tailed marginal distributions, not necessarily with Gaussian dependence, such tail approximations  have  been extremely useful in insurance and financial risk management, see \citep{embrechts:kluppelberg:mikosch:1997,ibragimov:prokhorov:2017} for details, and \citep{resnickbook:2007}, particularly for the notion of multivariate regular variation which we use to model heavy-tails in this paper. Moreover, we also refer to \citet{asmussen:rojas-nandayapa:2008} for studying threshold exceedance probabilities  with lognormal marginal distributions under Gaussian copula, to \citet{joe:li:2010,hua:joe:li:2014} for using copulas  to model multivariate distributions with regularly varying margins and to \citet{deo:murthy:2021}, for using large deviations to compute tail risks of general loss functions with examples of Gaussian copula dependence. For multivariate copulas, the notion of tail dependence has also been investigated extensively, see \citep{hua:joe:2011,joe:li:nikolouloupoulos:2010,furman:kuznetsov:su:zitikis:2016}.

 Before discussing further, recall that for a random vector $\bX=(X_1,\ldots, X_d)\sim F$ with  continuous marginal distributions $F_1,\ldots, F_d$, the copula $C:[0,1]^d \to [0,1]$ and the survival copula $\widehat{C}:[0,1]^d \to [0,1]$ are {distribution functions} such that $F(\bx)=\P(\bX \le \bx) =C(F_1(x_1), \ldots, F_d(x_d)),$ and $\overline{F}(\bx)=\P(\bX > \bx) =\widehat{C}(\overline{F}_1(x_1), \ldots, \overline{F}_d(x_d)),$ respectively,  for $\bx= (x_1,\ldots,x_d)\in \R^d,$
where $\overline{F}_j=1-F_j\; \forall\, j\in\{1,\ldots,d\}.$ If $\Phi_\Sigma$ defines a $d$-variate  normal distribution with all marginal means zero, variances one and positive semi-definite correlation matrix $\Sigma\in\R^{d\times d}$, and $\Phi$ denotes a standard normal distribution function, then $$C_{\Sigma}(u_1,\ldots,u_d)= \Phi_\Sigma(\Phi^{-1}(u_1),\ldots, \Phi^{-1}(u_d)), \quad \quad 0< u_1,\ldots, u_d < 1,$$
defines a Gaussian copula with correlation matrix $\Sigma$. 
Unless otherwise mentioned, throughout the paper,  $\bZ$
is a Gaussian random vector in $\R^d$ with distribution $\Phi_{\Sigma}$.  In contrast, $\bX$ is a random vector in $\R^d$ with heavy-tailed marginals and Gaussian copula $C_{\Sigma}$.

In \citep{sibuya:1960}, Sibuya showed that if $\bZ=(Z_1, Z_2)$ follows a bivariate normal distribution with any correlation $\rho\in (-1,1)$ where $Z_1, Z_2 \sim \Phi$, then $Z_1$ and $Z_2$ are \emph{asymptotically (tail) independent}, i.e.,
\begin{align}\label{eq:asyind}
    \lim_{t\to\infty}\P(Z_1>t|Z_2>t)= 0.
\end{align}
Asymptotic independence, as in \eqref{eq:asyind}, is equivalent to $\P(Z_1>t, Z_2>t)=o(\P(Z_2>t))$ as $t\to\infty$, meaning that the joint probability of crossings of large thresholds are significantly rarer than individual crossing of the same threshold. Additionally, for a bivariate Gaussian copula with correlation $\rho\in (-1,1)$, the corresponding survival copula $\widehat{C_\rho}$ behaves like  $\widehat C_{\rho}(u,u)= o(u)$ as $u\to0$, which indicates asymptotic independence, irrespective of the marginal distributions; see \citep{ledford:tawn:1996,hua:joe:2011}. For heavy-tailed distributions, the presence or absence of {asymptotic independence}, in various multivariate models have led to an increasing body of research on tail dependence ranging from tail dependence coefficients \citep{ledford:tawn:1996,coles:heffernan:tawn:1999} to hidden regular variation \citep{resnick:2002, mitra:resnick:2011hrv} and more.

In this context of heavy-tailed distributions, Gaussian copula and asymptotic independence, we  provide a reasonable answer to two pertinent issues. 
\begin{enumerate}[(1)]
     \item \emph{Question:} What exact effect does a Gaussian copula and its parameters have on the tail behavior of extreme sets when the marginals are heavy-tailed?\\
     \emph{Answer:}  We establish a precise result on the tail behavior of a random vector $\bX$ with  Gaussian copula and regularly varying tails in \Cref{sec:hrvGaussian}. For a set 
     \begin{align}\label{eq:tailset}
    A= \{\by\in \R^d_+: y_j>x_j \text{ for at least/ exactly $i$ components of $\bx$}\},
\end{align}
        with $\bx=(x_1, \ldots, x_d)\in \R_+^d:=[0,\infty)^d$ and $1\le i \le d$, 
     we are able to approximate
     \[\P(\bX\in tA) \sim t^{-\alpha_*}\ell_*(t)\mu_*(A)\]
    for large $t$ where $\mu^*$ is a non-zero  Borel measure, $\alpha_*>0$ and $\ell_*(tx)/\ell_*(t) \to 1$ (as $t\to \infty$). Both $\mu^*(A)$ and $\ell_*$  depend on the nature of the tail set and the Gaussian correlation matrix; we provide an  explicit representation. To our knowledge this has not yet been characterized in the literature, although heuristic approximations and partial results are mentioned for certain specific sets; see \cite{ledford:tawn:1996,hua:joe:li:2014}.

      \item \emph{Question:} We know that  a Gaussian copula admits asymptotic (tail) independence, but can we claim that as $t\to\infty$,
      \begin{align}\label{eq:genasyind}
          \P(X_1>tx_1, X_2>tx_2)= o(\min_j(\P(X_j>tx_j)))
      \end{align}
      holds as well for any $x_1,x_2>0$, implying that joint threshold crossings are in fact rarer than marginal threshold crossings, even if the thresholds are unequal?\\
      \emph{Answer:}  We observe that under a Gaussian copula dependence, \eqref{eq:genasyind} does hold when the marginals are heavy-tailed, but fails to hold for certain choices of $(x_1, x_2)$ when the marginals are normally distributed, which we discuss further in \Cref{sec:htvslt}. In  \citet{furman:kuznetsov:su:zitikis:2016}, the authors mention that ``\emph{\ldots all classical measures of tail dependence
are such, but they investigate the amount of tail dependence along the main diagonal
of copulas, which has often little in common with the concentration of extremes in the
copulas' domain of definition.}" -- we expand on this for the notions of asymptotic independence and multivariate regular variation with our results.
\end{enumerate}

We note here with caution that the  unrestricted usage of Gaussian copulas \citep{li:2000} in financial risk modeling has led to some serious criticisms \citep{salmon:2009}; see \citep{Donnelly:Embrechts,kluppelberg:resnick:2008,das:embrechts:fasen:2013} for further discussions.
 In this respect, our paper aims to help risk modelers on determining the suitability of a Gaussian copula for their specific tail risk measurements.

 
This paper is structured as follows. In \Cref{sec:prelim}, we present some preliminaries. This includes a discussion on multivariate regular variation on special cones of $\R_+^d$ and certain results on tails of multivariate Gaussian distributions.
The main results of this paper, the asymptotic behavior of probabilities of tail sets of multivariate distributions with tail equivalent regularly varying marginals and Gaussian copula are given in \Cref{sec:hrvGaussian}. Along with a general result, we provide several examples in \Cref{subsec:exmples}. In \Cref{sec:htvslt}, we discuss the structure of tail dependence when comparing heavy-tails and light-tails for bivariate Gaussian copula models. Our results are illustrated using simulations in \Cref{sec:simulation}. Finally, we conclude in \Cref{sec:concl} along with insights on possible risk management applications.

\subsection*{Notations}\label{subsec:notations}
The following notations are used throughout the paper.  The index set is denoted by $\mathbb{I}=\{1,\ldots,d\}$. The cardinality of a set $S\subseteq \mathbb{I}$ is denoted by $|S|$. 
Vector operations are  understood component-wise, e.g., for vectors $\bx=(x_1,\ldots,x_d)$ and $\by=(y_1,\ldots,y_d)$, $\bx\le \by$ means $x_j\le y_j$ for all $j$. 
We also have the following notations for vectors on $\R^d$: $\bzero=\{0,\ldots,0\}^\top, \bone=(1,\ldots,1)^\top,$ $\binfty=(\infty,\ldots,\infty)^\top$ and $e_j =(0,\ldots,1,\ldots,0)^{\top}$, $j\in\mathbb{I}$, where $e_j$ has only one non-zero entry 1 at the $j$-th co-ordinate.  For sequences \linebreak $\bx_t=(x_{t,1},\ldots,x_{t,d}),\by_t=(y_{t,1},\ldots,y_{t,d})\in\R^d$, $t>0$, the notation
$\by_t=\bx_t+o(t)$ as $t\to\infty$ means that $y_{t,j}-x_{t,j}=o(t)$ as $t\to\infty$ for every $j\in\mathbb{I}$.

 For a given vector $\bx\in\R^d$ and $S\subseteq \mathbb{I}$, we denote by  $\bx^{\top}$ the transpose of $\bx$ and by $\bx_S\in\R^{|S|}$ the vector obtained by deleting the components of $\bx$ in $\mathbb{I}\backslash S$. Similarly, for non-empty $I,J\subseteq\mathbb{I}$, $\Sigma_{IJ}$ denotes the appropriate sub-matrix of a given matrix $\Sigma\in\R^{d\times d}$; and we write $\Sigma_I$ for $\Sigma_{II}$.  
  The indicator function of an event $A$ is denoted by $\mathds{1}_{A}$. 
Moreover, for a constant $t>0$ and a set $A\subseteq \R_+^{d}$, we denote by $tA:= \{t\bx: \bx\in A\}$. For a random vector \linebreak $\bX=(X_1,\ldots,X_d)$, we write $\bX\sim F$ if $\bX$ has distribution function $F$; moreover, we understand that marginally $X_j\sim F_j$ for $j\in \mathbb{I}$.
Finally,  $\varphi$ and $\Phi$ are the density  and the distribution function, respectively of a standard normal distribution.

\section{Preliminaries}\label{sec:prelim} 
A few preliminary concepts and results are discussed in this section. In \Cref{subsec:mrv}, we provide a brief introduction to (multivariate) regular variation along with a definition of our model of interest.  In \Cref{subsec:multinormal}, we discuss the asymptotic behavior of probabilities of tail sets of multivariate normal distributions which is used to obtain the main results of this paper.

\subsection{Multivariate regular variation}\label{subsec:mrv}

Our goal is to study tail sets of heavy-tailed models with a Gaussian copula dependence structure. For our heavy-tailed models we assume that the marginal distributions are regularly varying, a popular paradigm for modeling such distributions.

\begin{definition} $\mbox{}$ 
\begin{itemize}
    \item[(a)] A measurable function $f: \R_{+} \to \R_{+}$ is \emph{{regularly varying} (at $+\infty$) with some fixed $\beta\in \R$} if $$\lim_{t\to\infty} f(tx)/f(t) =x^{\beta}, \qquad \forall\,x>0.$$
    We write $f\in\RV_{\beta}$ and if $\beta=0$, then we call $f$ \emph{slowly varying}. 
    \item[(b)] A real-valued random variable $X\sim F$  is regularly varying (at $+\infty$) if the tail 
  $\overline{F}:= 1-F \in \RV_{-\alpha}$ for some $\alpha> 0$. 
\end{itemize}
\end{definition}

The assumption that $X$ is regularly varying at $+\infty$ is equivalent to the existence of a measurable function $b:\R_+\to\R_+$ with $b(t)\to \infty$ as $t\to\infty$ such that
\[t\,\P(X>b(t)x) = t\,\overline{F}(b(t)x) \xrightarrow{t \to \infty} x^{-\alpha}, \quad \forall\, x>0.\]
Consequently, we have $b\in \RV_{1/\alpha}$ and a canonical choice for $b$ is
$$b(t)= F^{\leftarrow}(1-1/t)=\overline{F}^{\leftarrow}(1/t)$$ where $F^{\leftarrow}(x)= \inf\{y\in\R:F(y)\ge x\}$ is the generalized inverse of $F$. Some well-known distributions like Pareto, Burr, L\'evy, Fr\'ech\'et, Student's $t$, are all regularly varying; see \citet{embrechts:kluppelberg:mikosch:1997}.
An elaborate  discussion on regularly varying functions is given in the monograph of \citet{bingham:goldie:teugels:1989} and on regularly varying distributions in \citet{resnickbook:2007,resnickbook:2008}.

Since our particular interest is in  heavy-tailed random vectors $\bX$ in $\R^d$ with a Gaussian copula dependence structure,  we define the following.

\begin{definition}\label{def:rvgc}
An $\R^d$-valued random vector $\bX=(X_1,\ldots,X_d) \sim F$ follows a \emph{regularly varying tailed Gaussian copula distribution} with index $\alpha>0$, scaling function $b$, and correlation matrix $\Sigma$, if the following holds:
\begin{enumerate}[(i)]
    \item The  distribution function $ F_j$ of $X_j$  is continuous, strictly increasing and satisfies \linebreak $\overline{F}_j\in \RV_{-\alpha}$, $\forall\,j\in\mathbb{I}$.
    \item All marginals are completely tail equivalent: $\overline{F}_j(t)/\overline{F}_1(t) \to 1$ as $t\to\infty$, $\forall j\in \mathbb{I}$.
    \item  The function $b:\R_+\to\R_+$ is measurable with $b(t)\sim\ov F_1^{\leftarrow}(1/t)$ as $t\to\infty$. 
    \item The joint distribution function $F$ of $\bX$ is given by  $$F(\bx) = C_{\Sigma}(F_1(x_1),\ldots,F_d(x_d)), \quad \bx=(x_1,\ldots,x_d)\in\R^d,$$
      where $C_{\Sigma}$ denotes the Gaussian copula with correlation matrix $\Sigma\in\R^{d\times d}$.
\end{enumerate}
We write $\bX \in \RVGC(\alpha, b, \Sigma)$ where some of the parameters are ignored if not relevant to the context.
\end{definition}
\begin{remark}
   In the case that the correlation matrix is  an \emph{equi-correlation} matrix given by  $$\Sigma_\rho=\begin{pmatrix}
1 & \;\;\; \rho  & \;\;\; \ldots & \;\;\; \rho\\
\rho & \;\;\; 1  & \;\;\; \ldots & \;\;\; \rho\\
\vdots & \;\;\; \vdots & \;\;\; \vdots & \;\;\; \vdots \\
 \rho & \;\;\; \ldots & \;\;\; \ldots & \;\;\;  1
\end{pmatrix}$$ with $-\frac{1}{d-1}<\rho <1$ (making $\Sigma_\rho$ positive definite),  we write $\bX\in \RVGC(\alpha, b,  \Sigma_{\rho})$. This correlation matrix is used in a few examples later. 
\end{remark} 
Although we assume only marginal regular variation for our random vector $\bX$, eventually we  show that $\bX$ admits \emph{multivariate regular variation} on different subcones of $\R_+^d$.  The subcones we are interested in are
\begin{align*}
\E^{(i)}_{d}  := &\;  \R^d_+ \setminus \{\by\in \R^d_+: y_{(i)}=0\} = \; \{\by\in \R^d_+: y_{(i)}>0\}, \quad\quad 1\le i \le d, \label{def:edd}
\end{align*}
where $y_{(1)}\geq y_{(2)}\geq \ldots\geq y_{(d)}$ is the decreasing order statistic of $y_1,\ldots,y_d$.
 Here $\E^{(1)}_{d}$ is the {positive quadrant} with $\{\bzero\}$ removed, $\E^{(2)}_{d}$ is the {positive quadrant} with all one-dimensional co-ordinate axes removed, $\E^{(3)}_{d}$ is the {positive quadrant} with all two-dimensional co-ordinate hyperplanes removed, and so on.  Clearly,
\begin{equation} \label{eq2.1}
\E^{(1)}_{d} \supset \E^{(2)}_{d} \supset \ldots \supset \E^{(d)}_{d}.
\end{equation}
Note that for any $1\le i \le d$, the set $\{\by\in \R^d_+: y_{(i)}=0\}\subseteq \R_+^d$
is a closed cone in $\R_+^d$ containing $\bzero$. Moreover, for our particular sets of interest of the form $A$ as  in \eqref{eq:tailset}, we have $A\subset \E_d^{(i)}$.

\begin{dfn}\label{def:mrv}
Let $i\in\mathbb{I}$. A random vector $\bY \in \R^{d}$ is \emph{multivariate regularly varying} on $\E^{(i)}_{d}$ if there exists a regularly varying function
$b_i \in \RV_{1/\alpha_i}$, $\alpha_i >0$,  and a non-null (Borel) measure $\mu_i$ which is finite on Borel sets bounded away from $\{\by\in \R^d_+: y_{(i)}=0\}$
{such that} 
\begin{equation*}
    \lim_{t\to\infty}t\,\P\left( \frac{\bY}{b_i(t)} \in A \right)= \mu_i(A)
\end{equation*}
for all Borel sets $A\in\mathcal{B}(\E^{(i)}_{d})$ which are {\it bounded away from\/} $\{\by\in \R^d_+: y_{(i)}=0\}$ with $\mu_i(\partial A)=0$. We write $\bY \in \MRV(\alpha_i, b_i, \mu_i, \E^{(i)}_{d})$; one or more parameters are often dropped according to convenience.
\end{dfn}
The limit measure $\mu_i$ is homogeneous of order $-\alpha_i$, i.e., $\mu_i(\lambda A)=\lambda^{-\alpha_i}\mu_i(A)$ for any $\lambda>0$.
\begin{remark}
The type of convergence used here is $\M$-convergence of measures. It can be discussed in the more general context of multivariate regular variation on subcones \linebreak $\R_+^d\setminus\C_0$ of $\R_+^d$, where $\C_0$ is a closed cone containing $\bzero$. Multivariate regular variation on $\E_d^{(i)}$ is only a special example. More details
can be found in \cite{hult:lindskog:2006a,das:mitra:resnick:2013,lindskog:resnick:roy:2014,das:fasen:2023}.  
\end{remark}

If  $\bY \in \MRV(\alpha_i, b_i, \mu_i, \E^{(i)}_{d})$, $\forall i\in\mathbb{I}$,
then a direct conclusion from \eqref{eq2.1} is that
\beam
    \alpha_1\leq \alpha_2\leq \ldots\leq\alpha_d
\eeam
implying that the rate of decay of probabilities of appropriate  tail sets $tA \subset \E^{(i)}_{d}$ is not as fast as that for tail sets $tB \subset \E^{(i+1)}_{d}$ as $t\to\infty$.
 An alternative and quite useful characterization of regular variation on $\E_d^{(i)}$ follows from \citet[ Proposition 2.7]{das:fasen:2023}. 

\begin{lemma} \label{Lemma 2.3}
Let $\bY$ be a random vector in $\R^d$ and fix $i\in \mathbb{I}$. Suppose $\alpha_i>0$, $b_i\in \RV_{1/\alpha_i}$ is a measurable function and $\mu_i$ is a non-null (Borel) measure  
on $\mathcal{B}(\E_d^{(i)})$ which is finite on Borel sets bounded away from $\{\by\in \R^d_+: y_{(i)}=0\}$.
Then  $\bY \in \MRV(\alpha_i, b_i, \mu_i, \E^{(i)}_{d})$
if and only if
\begin{equation*}\label{eq:RegVarMeas}
    \lim_{t\to\infty}t\,\P\left( \frac{\bY}{b_i(t)} \in A_{\bx_S} \right)= \mu_i(A_{\bx_S})
\end{equation*}
for all sets $S\subseteq \mathbb{I}$ with $|S|\geq i$, for all $x_s>0$ $\forall s\in S$ such that $\bx_S=(x_s)_{s\in S}$ and   
 \begin{align}\label{set:Axs}
        A_{\bx_S} = \{\by\in \R_+^d: y_s > x_s, \forall s\in S\} 
    \end{align}
 with   $\mu_i(\partial A_{\bx_S})=0$.
\end{lemma}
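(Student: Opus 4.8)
The forward implication is immediate: for any $S\subseteq\mathbb I$ with $|S|\ge i$ and any $\bx_S\in(0,\infty)^{|S|}$, the set $A_{\bx_S}$ is Borel and satisfies $y_{(i)}\ge\min_{s\in S}x_s>0$ throughout $A_{\bx_S}$, so it is bounded away from $\{\by\in\R^d_+:y_{(i)}=0\}$; combined with the hypothesis $\mu_i(\partial A_{\bx_S})=0$, the displayed convergence is then nothing but a special case of the definition of $\MRV$ on $\E^{(i)}_d$. So all the work is in the converse, and the plan is to carry out the standard two-step argument for $\M$-convergence on cones (in the spirit of \cite{hult:lindskog:2006a,lindskog:resnick:roy:2014,das:fasen:2023}). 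Write $\nu_t:=t\,\P(\bY/b_i(t)\in\cdot)$, which is a boundedly finite Borel measure on $\E^{(i)}_d$. I would (i) show that $\{\nu_t\}_{t\ge1}$ is relatively compact in $\M(\E^{(i)}_d)$, and (ii) show that every subsequential limit coincides with $\mu_i$; together these give $\nu_t\to\mu_i$, which is exactly $\bY\in\MRV(\alpha_i,b_i,\mu_i,\E^{(i)}_d)$.

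For (i), the key observation is the identity $\{\by\in\R^d_+:y_{(i)}>\eps\}=\bigcup_{|S|=i}A_{\eps\bone_S}$. Each $A_{\eps\bone_S}$ can be enlarged by an arbitrarily small amount to a set $A_{\bx_S}\supseteq A_{\eps\bone_S}$ that is a $\mu_i$-continuity set --- possible because, inside any fixed ring $\{y_{(i)}>\delta\}$, a coordinate hyperplane $\{y_s=c\}$ carries positive $\mu_i$-mass for at most countably many $c$, the sets $\{y_s=c\}\cap\{y_{(i)}>\delta\}$ being disjoint with finite total mass. The hypothesis then gives $\limsup_{t}\nu_t(\{y_{(i)}>\eps\})\le\sum_{|S|=i}\mu_i(A_{\bx_S})<\infty$, and by monotonicity this bound holds for every $\eps>0$; this is precisely the criterion for relative compactness of $\{\nu_t\}$ in $\M(\E^{(i)}_d)$.

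For (ii), observe that $\mathcal A:=\{A_{\bx_S}:S\subseteq\mathbb I,\ |S|\ge i,\ \bx_S\in(0,\infty)^{|S|}\}$ is a $\pi$-system --- an intersection of two members of $\mathcal A$ is again a member, with index set the union, whose cardinality is still $\ge i$ --- and that $\mathcal A$ is rich enough that $\sigma(\mathcal A)=\mathcal B(\E^{(i)}_d)$ (coordinate half-spaces $\{y_j>c\}$ can be recovered from the $A_{\bx_S}$ by countable unions). Let $\mu'$ be the limit of a subsequence $\nu_{t_n}$ in $\M(\E^{(i)}_d)$. Using the same hyperplane argument as in (i), the set $A_{\bx_S}$ is a continuity set of both $\mu'$ and $\mu_i$ for all but a Lebesgue-null set of thresholds $\bx_S$, and for such sets $\M$-convergence and the hypothesis give $\mu'(A_{\bx_S})=\lim_n\nu_{t_n}(A_{\bx_S})=\mu_i(A_{\bx_S})$; monotone continuity of measures then extends this equality to all of $\mathcal A$. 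Restricting to each ring $\{y_{(i)}>\eps\}$, on which $\mu'$ and $\mu_i$ are both finite, Dynkin's $\pi$--$\lambda$ theorem upgrades agreement on $\mathcal A$ to agreement on $\mathcal B(\E^{(i)}_d)$, i.e.\ $\mu'=\mu_i$. With (i) this finishes the proof.

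The individual steps are routine estimates; the genuine difficulty is the measure-theoretic bookkeeping --- verifying that the corner sets $A_{\bx_S}$ really do form a convergence-determining class for \emph{boundedly} (rather than globally) finite measures on the cone $\E^{(i)}_d$, and stating the relative-compactness criterion correctly in the $\M$-convergence topology. This is exactly what the general theory of \cite{lindskog:resnick:roy:2014,das:fasen:2023} provides; in fact the statement is \citet[Proposition~2.7]{das:fasen:2023} applied to the forbidden cone $\{\by\in\R^d_+:y_{(i)}=0\}$, so a shorter alternative is to quote that result directly after checking its hypotheses.
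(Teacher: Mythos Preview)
Your argument is correct. The paper itself does not give a proof of this lemma at all: it simply states that the characterization follows from \citet[Proposition~2.7]{das:fasen:2023}. Your final paragraph already notes this, so in that sense you and the paper agree; the difference is only that you additionally sketch the underlying relative-compactness plus $\pi$--$\lambda$ identification argument, which is precisely how such determining-class results are proved in \cite{lindskog:resnick:roy:2014,das:fasen:2023}. Nothing in your sketch is wrong, though a couple of details (e.g.\ that $\sigma(\mathcal A)$ restricted to each ring $\{y_{(i)}>\varepsilon\}$ is the full Borel $\sigma$-algebra there, and that the ring itself can be written as a finite union of $\mu_i$- and $\mu'$-continuity sets from $\mathcal A$ after a small perturbation of $\varepsilon$) would need to be spelled out in a full write-up.
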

The sets  $A_{\bx_S}$, as defined above, are often called \emph{rectangular sets}. Of course, $A_{\bx_S}\in\mathcal{B}(\E_d^{(j)})$ for $j=1,\ldots, |S|$. 

\begin{remark} $\mbox{}$
\begin{itemize}
    \item[(a)] If $\bY \in \MRV(\alpha_i, b_i, \mu_i, \E^{(i)}_{d})$ for $i=1,\ldots,d$ and $\mu_{|S|}(A_{\bx_S})>0$ then we able to approximate as $t\to\infty$ the probability
\beao
    \P(\bY\in tA_{\bx_S})\sim \frac{1}{b_{|S|}^{\leftarrow}(t)}\mu_{|S|}(A_{\bx_S}),
\eeao
where $b_{|S|}^{\leftarrow}(t)=\inf\{s\geq 0: b_{|S|}(s)\geq t\}\in \RV_{\alpha_{|S|}}$; see \cite{bingham:goldie:teugels:1989}. 
Hence, if 
$\alpha_i<\alpha_j$ for some $1\leq i<j\leq d$, then for $S_1,S_2\subseteq \mathbb{I}$ with $|S_1|=i$ and $|S_2|=j$,
and $\mu_i(A_{\bx_{S_1}})>0$, $\mu_j(A_{\bx_{S_2}})>0,$ we have
\beao
    \lim_{t\to\infty}\frac{\P(\bY\in tA_{\bx_{S_2}})}{\P(\bY\in tA_{\bx_{S_1}})}=0.
\eeao
This implies that the rate of decay of probabilities differ between the sets in $\E_d^{(i)}$ and those in $\E_d^{(j)}$.
\item[(b)] The advantage of having multivariate regular variation on $\E_d^{(i)}$ is that we not only get estimates of tail probabilities of rectangular sets $tA_{\bx_S}$ with $|S|\geq i$ as $t\to\infty$, but actually for any Borel set bounded away from $\{\by\in \R^d_+: y_{(i)}=0\}$ which are $\mu_i$ continuity sets, e.g., consider the sets
\begin{align}\label{set:Bx}
    B_{\bx,i} = \{\by\in \R^d_+: y_j>x_j \text{ for at least $i$ components of $\bx$}\},
\end{align}
where $\bx=(x_1,\dots,x_d)>\bzero$ and $i\in\mathbb{I}$.  Note that $B_{\bx,i} \in \mathcal{B}(\E_d^{(i)})$ and $B_{\bx,i}$ is not a rectangular set. A special case is
$B_{\bone,i}=\{\by\in \R^d_+:y_{(i)}>1\}$.
\end{itemize}
\end{remark}

\subsection{Tail behavior of a multivariate normal distribution}\label{subsec:multinormal}

In this section, we characterize the tail probability of a multivariate normal random variable $\bZ$ with standard normal marginals, and correlation matrix $\Sigma$. This asymptotic behavior is needed to derive the tail probability approximation for $\P(\bX\in tA_{\bx_S})$ as $t\to\infty$ where $\bX\in \RVGC(\alpha, b,  \Sigma)$.
A direct application of multivariate Mill's ratio of  \citet{savage:1962} for the approximation of Gaussian tail sets  requires that the correlation matrix $\Sigma$ of the Gaussian random vector satisfies $\Sigma^{-1}\bone>\bzero$.
Extensions of Savage's result allow for the approximation of  $\P(\bZ> \boldsymbol{t})$ for large values of $\boldsymbol{t}$ if $\Sigma^{-1}\bone>\bzero$ is not satisfied; see \citep{savage:1962,tong:1989,elnaggar:mukherjea:1999,hashorva:husler:2003,hashorva:2005} for details.
Our result is based on the ideas of  \citet[Theorem 3.4]{hashorva:2005}. First, we state  an auxiliary result on the solution of a quadratic programming problem. This sets up  all notations and background for us to present the  result on the probability of tail sets of a Gaussian random vector. 

\begin{lemma}\label{lem:qp}
Let $\Sigma\in \R^{d\times d}$ be a positive definite correlation matrix. Then the quadratic programming problem 
\begin{align}\label{eq:quadprog}
  \mathcal{P}_{\Sigma^{-1}}: \min_{\{\bz\ge \bone\}} \bz^\top \Sigma^{-1}\bz  
\end{align}
  has a unique solution $\be^*=\be^*(\Sigma)\in\R^d$ such that
    \begin{align}
    \gamma:= \gamma(\Sigma):= \min_{\{\bz\ge \bone\}} \bz^\top \Sigma^{-1} \bz=\be^{*\top} \Sigma^{-1}\be^*>1. 
\end{align}
Moreover, the following hold:
\begin{enumerate}[(a)]
\item If $\Sigma^{-1}\bone \geq \bzero$ then  $\be^*=\bone.$

\item There exists a unique non-empty index set  $I:=I(\Sigma)\subseteq \{1,\ldots, d\}=:\mathbb{I}$ with $J:=J(\Sigma):=\mathbb{I}\setminus I$ such that  the unique solution $\be^*$ is given by
$$\be^*_I=\bone_I, \qquad \text{ and } \qquad\be^*_J=-[\Sigma^{-1}]_{JJ}^{-1}[\Sigma^{-1}]_{JI}\bone_I=\Sigma_{JI}(\Sigma_I)^{-1}\bone_I\geq \bone_J,$$
and moreover, 
$$ \bone_{I}\Sigma_{I}^{-1}\bone_I=\be^{*\top} \Sigma^{-1}\be^*=\gamma>1.$$
 Finally,   $h_i:=h_i(\Sigma):=e_i^\top \Sigma^{-1}_{I}\bone_I>0$ for $i\in I$ and for any $\bz\in\R^d$ the following equality holds: $$\bz^{\top}\Sigma^{-1}\be^*=\bz_I^{\top}\Sigma_{I}^{-1}\bone_I.$$ 
\end{enumerate}
\end{lemma}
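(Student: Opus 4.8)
The plan is to view $\mathcal{P}_{\Sigma^{-1}}$ as a strictly convex quadratic program and to read off all the stated structure from its Karush--Kuhn--Tucker (KKT) conditions. Since $\Sigma^{-1}$ is positive definite, the objective $\bz\mapsto\bz^\top\Sigma^{-1}\bz$ is strictly convex and coercive, while $\{\bz\ge\bone\}$ is nonempty, closed and convex; hence the minimizer $\be^*$ exists and is unique. The inequality constraints $z_i\ge 1$ are affine, so the KKT conditions are necessary and sufficient for optimality: there is a multiplier vector $\boldsymbol{\lambda}\ge\bzero$ with $2\Sigma^{-1}\be^*=\boldsymbol{\lambda}$ (stationarity) and $\lambda_i(e_i^\top\be^*-1)=0$ for every $i$ (complementary slackness). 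In particular $\Sigma^{-1}\be^*\ge\bzero$, and $(\Sigma^{-1}\be^*)_i>0$ forces $e_i^\top\be^*=1$.

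Part (a) is then immediate: if $\Sigma^{-1}\bone\ge\bzero$, then $\bz=\bone$ satisfies the KKT system with $\boldsymbol{\lambda}=2\Sigma^{-1}\bone$, so by sufficiency and uniqueness $\be^*=\bone$. For part (b), I would set $I:=\{i\in\mathbb{I}:(\Sigma^{-1}\be^*)_i>0\}$ and $J:=\mathbb{I}\setminus I=\{j:(\Sigma^{-1}\be^*)_j=0\}$, a genuine dichotomy because $\Sigma^{-1}\be^*\ge\bzero$; moreover $I\ne\emptyset$, since $I=\emptyset$ would give $\be^*=\bzero$, contradicting $\be^*\ge\bone$. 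Complementary slackness yields $\be^*_I=\bone_I$. Writing $\boldsymbol{v}:=\Sigma^{-1}\be^*$, which vanishes on $J$ and is strictly positive on $I$, and reading off the $I$- and $J$-blocks of $\be^*=\Sigma\boldsymbol{v}$ gives $\bone_I=\be^*_I=\Sigma_I\boldsymbol{v}_I$, hence $\boldsymbol{v}_I=(\Sigma_I)^{-1}\bone_I$ (so $h_i=(\boldsymbol{v}_I)_i>0$ for $i\in I$) and $\be^*_J=\Sigma_{JI}\boldsymbol{v}_I=\Sigma_{JI}(\Sigma_I)^{-1}\bone_I$, which is $\ge\bone_J$ because $\be^*\ge\bone$. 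The alternative form $\be^*_J=-[\Sigma^{-1}]_{JJ}^{-1}[\Sigma^{-1}]_{JI}\bone_I$ follows from the identity $-[\Sigma^{-1}]_{JJ}^{-1}[\Sigma^{-1}]_{JI}=\Sigma_{JI}(\Sigma_I)^{-1}$, read off the $(J,I)$-block of $\Sigma^{-1}\Sigma=\mathrm{Id}$ and using that $\Sigma_I$ and $[\Sigma^{-1}]_{JJ}$ are principal submatrices of positive definite matrices and hence invertible. Finally, since $\boldsymbol{v}$ is supported on $I$, for any $\bz\in\R^d$ we have $\bz^\top\Sigma^{-1}\be^*=\bz^\top\boldsymbol{v}=\bz_I^\top\boldsymbol{v}_I=\bz_I^\top(\Sigma_I)^{-1}\bone_I$, and with $\bz=\be^*$ this reads $\gamma=\bone_I^\top(\Sigma_I)^{-1}\bone_I$.

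For uniqueness of $I$: if $I'$ is any index set with $\be^*_{I'}=\bone_{I'}$, $\be^*_{J'}=\Sigma_{J'I'}(\Sigma_{I'})^{-1}\bone_{I'}\ge\bone_{J'}$ and $((\Sigma_{I'})^{-1}\bone_{I'})_i>0$ for all $i\in I'$ (with $J'=\mathbb{I}\setminus I'$), then the vector $\boldsymbol{w}$ defined by $\boldsymbol{w}_{I'}=(\Sigma_{I'})^{-1}\bone_{I'}$ and $\boldsymbol{w}_{J'}=\bzero$ satisfies $\Sigma\boldsymbol{w}=\be^*$, so $\boldsymbol{w}=\Sigma^{-1}\be^*$; since $\boldsymbol{w}$ is strictly positive on $I'$ and zero on $J'$, necessarily $I'=\{i:(\Sigma^{-1}\be^*)_i>0\}=I$. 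It remains to prove $\gamma>1$. As $\Sigma$ is positive definite it induces the inner product $\langle\boldsymbol{u},\boldsymbol{w}\rangle_\Sigma:=\boldsymbol{u}^\top\Sigma\boldsymbol{w}$; writing $\by:=\Sigma^{-1}\be^*$, Cauchy--Schwarz together with $\Sigma_{ii}=1$ gives, for every $i$, $(e_i^\top\be^*)^2=\langle e_i,\by\rangle_\Sigma^2\le\langle e_i,e_i\rangle_\Sigma\langle\by,\by\rangle_\Sigma=\be^{*\top}\Sigma^{-1}\be^*=\gamma$, with equality only if $\be^*\in\mathrm{span}\{\Sigma e_i\}$. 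Since $\be^*\ge\bone$, this gives $\gamma\ge(e_i^\top\be^*)^2\ge1$; and $\gamma=1$ would force $e_i^\top\be^*=1$ for all $i$, i.e.\ $\be^*=\bone$, together with $\bone\in\mathrm{span}\{\Sigma e_i\}$ for all $i$, i.e.\ $\mathrm{rank}(\Sigma)\le1$, contradicting positive definiteness (for $d\ge2$; the scalar case being degenerate). Hence $\gamma>1$.

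The step I expect to be the main obstacle is the bookkeeping in part (b), and especially the decision to take $I$ to be the \emph{support} of $\Sigma^{-1}\be^*$ rather than the active set $\{i:e_i^\top\be^*=1\}$, which may be strictly larger in degenerate cases; this choice is exactly what makes $h_i>0$ hold strictly and what pins down the unique $I$. The block-inversion identity and the strict inequality $\gamma>1$ are comparatively routine once that is in place.
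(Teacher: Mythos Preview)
Your proof is correct. The paper does not actually prove this lemma; it simply cites \citet[Proposition~2.1]{hashorva:2005} and \citet[Proposition~2.5 and Corollary~2.7]{hashorva:husler:2002}, so you have supplied considerably more than the paper itself. Your approach---existence/uniqueness by strict convexity and coercivity, then reading off the structure from the KKT system---is the standard one and is essentially what those references do. Two small remarks: (i) your observation that $I$ must be taken as the \emph{support of the multiplier} $\Sigma^{-1}\be^*$ rather than the (possibly larger) active set is exactly the right subtlety, and it is what makes $h_i>0$ strict and $I$ unique; (ii) your Cauchy--Schwarz argument for $\gamma>1$ is clean and correctly uses that $\Sigma$ is a correlation matrix ($\Sigma_{ii}=1$), with the caveat---which you note---that the strict inequality requires $d\ge 2$; the paper tacitly assumes this throughout (cf.\ Proposition~2.4).
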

\Cref{lem:qp}  is taken from \citet[Proposition~2.1]{hashorva:2005}  and \citet[Proposition 2.5 and Corollary 2.7]{hashorva:husler:2002}. 
Note that, if  $\Sigma^{-1}\bone \geq \bzero$ then $|I|\geq 2$ but it is not necessarily that $|I|=d$. In the next proposition, which is relevant for our main result in \Cref{sec:hrvGaussian}, \Cref{lem:qp} is used to determine the tail probability of a normal random vector $\bZ$ for a particular kind of increasing threshold.

\begin{proposition}\label{prop:gausstail}
    Let $\bZ\sim \Phi_{\Sigma}$ be a normal random vector in $\R^d$, $d\ge 2$, with positive definite correlation matrix $\Sigma\in\R^{d\times d}$.  Define the following quantities:
    \begin{enumerate}[(a)]
    \item The parameters $\gamma=\gamma(\Sigma), I=I(\Sigma)$, $\be^{*}=\be^*(\Sigma)$ and $h_i=h_i(\Sigma)$, $i\in I$ are defined with respect to the solution of the quadratic programming problem $\mathcal{P}_{\Sigma^{-1}}$ as in \Cref{lem:qp}.   
    \item Let  $J:=\mathbb{I}\setminus I$. Define
 $Y_J\sim \mathcal{N}(\bzero_J, \Sigma_J-\Sigma_{JI}\Sigma_{I}^{-1}\Sigma_{IJ})$ if $J\not=\emptyset$ and $Y_J=\bzero$ if $J=\emptyset$. 
 \item Define $\bl_{\infty}:=\lim_{t\to\infty}t(\bone_J-\be_J^*)$, a vector in $\R^{|I|}$ with components either  $0$ or $-\infty$.
  \end{enumerate}
  Let $\bz \in \R^d$ and  $\mathcal{L},u:(0,\infty)\to(0,\infty)$ be measurable functions such that $u(t)\uparrow \infty$ and ${\log(\mathcal{L}(t))}/{u(t)}\to 0,$  as $t\to\infty$.
    Then as $t\to \infty$,
    \begin{align*}
         \P & \left(\bZ >  u(t)\bone + \frac{\bz}{u(t)}  + \frac{\log(\mathcal{L}(t))}{u(t)}\bone +o\left(\frac{1}{u(t)}\right)\right) \nonumber\\
          & \quad\quad\quad\quad\quad =(1+o(1))\Upsilon u(t)^{-|I|} (\mathcal{L}(t))^{-\gamma} \exp\left(-\gamma \frac{u(t)^2}2 - \bz^\top\Sigma^{-1}\be^*\right),
    \end{align*}
where 
\begin{align*}
  \Upsilon:=\Upsilon(\Sigma):=
   \frac{\P(Y_J\geq \bl_\infty)}{(2\pi)^{|I|/2}|\Sigma_{I}|^{1/2}\prod_{i\in I}h_i }.
\end{align*}
\end{proposition}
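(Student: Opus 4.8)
The plan is to reduce the tail-probability computation to the form handled by Theorem~3.4 of \citet{hashorva:2005}, which gives the asymptotics of $\P(\bZ > \bt)$ for a sequence of thresholds $\bt = \bt(t)$ growing like $u(t)\bone$ along a ray whose relevant geometry is governed by the quadratic program $\mathcal{P}_{\Sigma^{-1}}$. First I would set $\bt(t) := u(t)\bone + \bz/u(t) + \log(\mathcal{L}(t))\bone/u(t) + o(1/u(t))$ and observe that, since $u(t)\uparrow\infty$ and $\log(\mathcal{L}(t))/u(t)\to 0$, we have $\bt(t) = u(t)\bone + o(u(t))\bone$, so the leading order is $u(t)\bone$ and the correction terms are lower order; this is exactly the regime where the large-deviation rate is determined by $\min_{\bz\ge\bone}\bz^\top\Sigma^{-1}\bz = \gamma$, attained uniquely at $\be^*$ by \Cref{lem:qp}. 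Writing $\overline{\bt}(t) := \bt(t)/u(t) = \bone + \bz/u(t)^2 + \log(\mathcal{L}(t))\bone/u(t)^2 + o(1/u(t)^2) \to \bone$, the Gaussian density at the ``dominant point'' $\gamma^{-1}\Sigma^{-1}\bt(t)$ relative to $\bt(t)$ contributes the exponential factor $\exp(-\tfrac12 \bt(t)^\top \Sigma^{-1}\be^* \cdot (\text{appropriate scaling}))$; carrying out this expansion carefully is the heart of the argument.

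The key computation is the Taylor expansion of the exponent $\tfrac12 \bt(t)^\top \Sigma^{-1}\be^*$ (which, by the structure in \Cref{lem:qp}, equals $\tfrac12 \bt_I(t)^\top \Sigma_I^{-1}\bone_I$ — only the $I$-block matters). Substituting $\bt_I(t) = u(t)\bone_I + \bz_I/u(t) + \log(\mathcal{L}(t))\bone_I/u(t) + o(1/u(t))$ and using $\bone_I^\top \Sigma_I^{-1}\bone_I = \gamma$ and $h_i = e_i^\top\Sigma_I^{-1}\bone_I$, the cross term $u(t)\bone_I^\top\Sigma_I^{-1}(\bz_I + \log(\mathcal{L}(t))\bone_I)/u(t)$ produces precisely $\bz_I^\top\Sigma_I^{-1}\bone_I + \gamma\log(\mathcal{L}(t)) = \bz^\top\Sigma^{-1}\be^* + \gamma\log(\mathcal{L}(t))$, while the leading term gives $\gamma u(t)^2/2$; the $o(1/u(t))$ piece contributes $o(1)$ to the exponent after multiplication by $u(t)$, hence a $(1+o(1))$ factor. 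Exponentiating yields $(\mathcal{L}(t))^{-\gamma}\exp(-\gamma u(t)^2/2 - \bz^\top\Sigma^{-1}\be^*)$, matching the claimed form.

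The polynomial prefactor $\Upsilon\, u(t)^{-|I|}$ and the constant $\Upsilon$ come from the standard saddle-point / Laplace analysis near $\be^*$: the $|I|$ ``active'' constraints each contribute a factor of order $u(t)^{-1}$ (from integrating a Gaussian tail against a half-line, the Mill's-ratio factor $1/(u(t) h_i)$), the Gaussian normalization on the $I$-block gives $(2\pi)^{-|I|/2}|\Sigma_I|^{-1/2}$, and the $J$-block (the ``inactive'' directions, where the constraint $z_j > t_j$ with $t_j = u(t) e_j^{*} + \cdots$ and $e_j^* > 1$ for $j\in J$ becomes slack) contributes the conditional probability $\P(Y_J \ge \bl_\infty)$, where $Y_J$ is the conditional law of $\bZ_J$ given $\bZ_I$ on the active face, namely $\mathcal{N}(\bzero_J, \Sigma_J - \Sigma_{JI}\Sigma_I^{-1}\Sigma_{IJ})$, and $\bl_\infty = \lim_{t\to\infty} t(\bone_J - \be_J^*)$ records whether each slack constraint recedes to $-\infty$ (contributing a trivial factor $1$) or stays at $0$. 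Assembling these gives $\Upsilon = \P(Y_J\ge\bl_\infty) / \big((2\pi)^{|I|/2}|\Sigma_I|^{1/2}\prod_{i\in I}h_i\big)$.

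I expect the main obstacle to be the uniformity of the remainder estimates: one must verify that the $o(1/u(t))$ perturbation in the threshold, together with the slowly-varying factor $\mathcal{L}(t)$ (whose logarithm is $o(u(t))$ but not necessarily bounded), does not disturb the leading asymptotics — in particular that $\log(\mathcal{L}(t))/u(t)\to 0$ is exactly the borderline condition ensuring the $(\mathcal{L}(t))^{-\gamma}$ factor is the only effect of $\mathcal{L}$ and it does not leak into the prefactor or the $o(1)$. This requires re-examining the proof of \citet[Theorem~3.4]{hashorva:2005} to confirm it tolerates a threshold of the stated three-term-plus-$o$ form rather than a purely linear one, which I would do by tracking the error terms in their Laplace expansion under the slightly more general threshold sequence.
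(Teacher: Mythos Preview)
Your proposal is correct and follows essentially the same route as the paper: invoke the Hashorva (2005) asymptotic for $\P(\bZ>u(t)\bone+\bx)$ (the paper uses Corollary~3.3 rather than Theorem~3.4, but these are the same machinery), then substitute the $t$-dependent perturbation $\bx^{(t)}=\bz/u(t)+\log(\mathcal{L}(t))\bone/u(t)+o(1/u(t))\to\bzero$, expand the exponent using $u(t)\bx^{(t)}\to\bz+\log(\mathcal{L}(t))\bone$ and $\bone^\top\Sigma^{-1}\be^*=\gamma$, and note that the justification for allowing $\bx$ to vary with $t$ requires going back through Hashorva's proof --- exactly the uniformity issue you flag as the main obstacle. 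The paper's proof is in fact terser than your outline on this last point; it simply asserts that ``following arguments analogous to the proof of Theorem~3.1 and Corollary~3.3'' the substitution is permitted.
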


 \begin{proof}
From \citet[Corollary~3.3]{hashorva:2005} 
we know that for  $\bx\in \R^d$, as $t\to\infty$,
\begin{align*}
        \P(\bZ >u(t)\bone + \bx) =(1+o(1))& \frac{\P(Y_J>\bl_\infty+\bx_J-\Sigma_{JI}\Sigma_I^{-1}\bx_I)}{(2\pi)^{|I|/2}|\Sigma_{I}|^{1/2}\prod_{i\in I}h_i } \\
        & \times u(t)^{-|I|}\exp\left(-\gamma \frac{u(t)^2}2 - u(t)\bx^\top\Sigma^{-1}\be^*-\bx^{\top}\Sigma^{-1}\bx\right).      
    \end{align*}
    Define $\bx^{(t)}:=\frac{\bz}{u(t)}  + \frac{\log(\mathcal{L}(t))}{u(t)}\bone +o\left(\frac{1}{u(t)}\right)$. Therefore, $\lim_{t\to\infty}\bx^{(t)}=\bzero$ and we have $\be_J^*\geq\bone_J$ by \Cref{lem:qp}.
    Thus,
    \beao
    \lim_{t\to\infty}\P\left(Y_J\ge  u(t)(\bone_J-e_J^*)+\bx_J^{(t)}-\Sigma_{JI}\Sigma_I^{-1}\bx_I^{(t)}\right)=\P(Y_J\ge \bl_\infty).
\eeao
Following arguments  analogous to the proof of Theorem~3.1 and Corollary~3.3 
in \citet{hashorva:2005}, we are allowed to replace  $\bx$ by $\bx^{(t)} $ and
$\P(Y_J>\ell_\infty+\bx_J-\Sigma_{JI}\Sigma_I^{-1}\bx_I)$ by $\P(Y_J\ge \bl_\infty)$
to   obtain
\begin{align*}
         \P\Bigg(\bZ > u(t)\bone & + \frac{\bz}{u(t)}  + \frac{\log(\mathcal{L}(t))}{u(t)}\bone +o\left(\frac{1}{u(t)}\right) \Bigg)\\
         &= \P\left(\bZ > u(t)\bone + \bx^{(t)}\right)\\ 
        &=(1+o(1))\Upsilon u(t)^{-|I|}  
        \exp\left(-\gamma \frac{u(t)^2}2 - u(t)\bx^{(t)\top}\Sigma^{-1}\be^*-\bx^{(t)\,\top}\Sigma^{-1}\bx^{(t)}\right) 
\intertext{and since $\bx^{(t)}\to \bzero$ and $u(t)\bx^{(t)}\sim \bz  +\log(\mathcal{L}(t))\bone$ as $t\to\infty$ we have the above to be} 
        &=(1+o(1))\Upsilon u(t)^{-|I|}  
        \exp\left(-\gamma \frac{u(t)^2}2 - \bz^\top\Sigma^{-1}\be^*- \log{\mathcal{L}(t)}\cdot\bone^{\top}\Sigma^{-1}\be^* \right) \\  
       & =(1+o(1))\Upsilon u(t)^{-|I|} (\mathcal{L}(t))^{-\gamma} \exp\left(-\gamma \frac{u(t)^2}2 - \bz^\top\Sigma^{-1}\be^*\right),
    \end{align*}
  where for the last step we used  $\bone^{\top}\Sigma^{-1}\be^*= \bone_I\Sigma_{I}^{-1}\bone_I =\gamma$ from \Cref{lem:qp}.  
 \end{proof}

\section{Multivariate regular variation with Gaussian copula}\label{sec:hrvGaussian} 

In this section, we exhibit multivariate regular variation for $\bX\in \RVGC(\alpha, b,  \Sigma)$ on the different subcones  $\E_d^{(i)}\subset\R_+^d$. We begin by computing  probabilities of such random vectors $\bX$ lying in rectangular sets. This is particularly useful in practice, since many tail risk sets are rectangular in nature. The eventual result on multivariate regular variation of $\bX$ allows an extension of this to more general tail sets.

\begin{theorem}\label{prop:premain}
    Let  $\bX\sim F$ with $\bX\in \RVGC(\alpha, b, \Sigma)$ where $\Sigma$ is positive definite. Fix a non-empty set $S\subseteq \mathbb{I}=\{1,\ldots,d\}$ with $|S|\ge 2$ and let $\gamma_S:=\gamma(\Sigma_S)$, $I_S:=I(\Sigma_S)$, $\be_S^*:=\be^*(\Sigma_S)$, $\Upsilon_S:=\Upsilon(\Sigma_S)$ and $h_{s}^S:=h_s(\Sigma_S), s\in I_S$, be defined as in  \Cref{prop:gausstail} (and in \Cref{lem:qp}). Let 
    \begin{align}\label{def:Axs}
        A_{\bx_S} = \{\by\in \R_+^d: y_s > x_s, \forall s\in S\} 
    \end{align}
   for $\bx_S=(x_s)_{s\in S}$  with $x_s>0, \forall s\in S$ be a rectangular set. If  $\log(\ell(t))=o(\sqrt{\log(t)})$ as $t\to\infty$ where $\overline{F}_1(t)=(t^{\alpha}\ell(t))^{-1}$,  then, as $t\to\infty$,
    \begin{align}\label{eq:mainlimitmrvA}
        \P(\bX\in tA_{\bx_S}) = (1+o(1))\Upsilon_S({2\pi})^{\frac{\gamma_S}2}  (2\alpha\log (t))^{\frac{\gamma_S-|I_S|}{2}}(b^{\leftarrow}(t))^{-\gamma_S}  \prod_{s\in I_S} x_s^{-\alpha h_s^S}.
    \end{align}
\end{theorem}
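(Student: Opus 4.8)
The plan is to reduce $\P(\bX\in tA_{\bx_S})$ to the Gaussian tail estimate of \Cref{prop:gausstail} via the probability-integral transform, the only genuinely technical input being a second-order inversion of the normal quantile. Since each $F_j$ is continuous and strictly increasing, $\bX\eqd(F_1^{\leftarrow}(\Phi(Z_1)),\ldots,F_d^{\leftarrow}(\Phi(Z_d)))$ with $\bZ\sim\Phi_\Sigma$, so for $\bx_S$ with $x_s>0$ for all $s\in S$,
\[
\P(\bX\in tA_{\bx_S})=\P\bigl(Z_s>\overline\Phi^{\leftarrow}(\overline F_s(x_st))\ \text{for all }s\in S\bigr)=\P\bigl(\bZ_S>\boldsymbol\tau(t)\bigr),
\]
where $\bZ_S\sim\Phi_{\Sigma_S}$ with $\Sigma_S$ a principal submatrix of $\Sigma$ (hence positive definite, of size $|S|\ge 2$), and $\boldsymbol\tau(t):=(\overline\Phi^{\leftarrow}(\overline F_s(x_st)))_{s\in S}$. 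Since $\overline F_s(x_st)\downarrow 0$, every component of $\boldsymbol\tau(t)$ tends to $+\infty$.

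\textbf{Step 2 (asymptotics of the threshold).} From complete tail equivalence and $\overline F_1\in\RV_{-\alpha}$ with $\overline F_1(t)=(t^\alpha\ell(t))^{-1}$ one gets $\overline F_s(x_st)\sim x_s^{-\alpha}\overline F_1(t)$, hence $-\log\overline F_s(x_st)=\alpha\log t+\log\ell(t)+\alpha\log x_s+o(1)$. Combining this with the inversion of the Gaussian tail $\overline\Phi(z)\sim\varphi(z)/z$, which gives $\bigl(\overline\Phi^{\leftarrow}(p)\bigr)^2=-2\log p-\log(-2\log p)-\log(2\pi)+o(1)$ as $p\downarrow 0$, and using $\log\ell(t)=o(\sqrt{\log t})$ (so $\log(-2\log\overline F_s(x_st))=\log(2\alpha\log t)+o(1)$), I obtain
\[
[\boldsymbol\tau(t)]_s^2=2\alpha\log t+2\log\ell(t)+2\alpha\log x_s-\log(2\alpha\log t)-\log(2\pi)+o(1).
\]
Setting $u(t):=\sqrt{2\alpha\log t}$, $\mathcal L(t):=\ell(t)/(\sqrt{2\pi}\,u(t))$ and $\bz:=(\alpha\log x_s)_{s\in S}\in\R^{|S|}$, a square-root (Taylor) expansion — in which $\log\ell(t)=o(\sqrt{\log t})$ again forces the quadratic correction to be $o(1/u(t))$ — yields $[\boldsymbol\tau(t)]_s=u(t)+\bz_s/u(t)+\log(\mathcal L(t))/u(t)+o(1/u(t))$ for each $s\in S$. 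Moreover $u(t)\uparrow\infty$ and $\log(\mathcal L(t))/u(t)\to 0$, so \Cref{prop:gausstail} applies to $\bZ_S\sim\Phi_{\Sigma_S}$ with these $u,\mathcal L,\bz$.

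\textbf{Step 3 (apply \Cref{prop:gausstail} and simplify).} The proposition gives
\[
\P(\bZ_S>\boldsymbol\tau(t))=(1+o(1))\,\Upsilon_S\,u(t)^{-|I_S|}\,\mathcal L(t)^{-\gamma_S}\exp\!\left(-\gamma_S\tfrac{u(t)^2}{2}-\bz^\top\Sigma_S^{-1}\be_S^*\right).
\]
I then substitute $u(t)^{-|I_S|}=(2\alpha\log t)^{-|I_S|/2}$, $\mathcal L(t)^{-\gamma_S}=(2\pi)^{\gamma_S/2}(2\alpha\log t)^{\gamma_S/2}\ell(t)^{-\gamma_S}$, $\exp(-\gamma_S u(t)^2/2)=t^{-\alpha\gamma_S}$, and, via \Cref{lem:qp}(b) for $\Sigma_S$, $\bz^\top\Sigma_S^{-1}\be_S^*=\bz_{I_S}^\top\Sigma_{I_S}^{-1}\bone_{I_S}=\alpha\sum_{s\in I_S}h_s^S\log x_s$, so that $\exp(-\bz^\top\Sigma_S^{-1}\be_S^*)=\prod_{s\in I_S}x_s^{-\alpha h_s^S}$. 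Collecting the powers of $\log t$ gives $(2\alpha\log t)^{(\gamma_S-|I_S|)/2}$, while $t^{-\alpha\gamma_S}\ell(t)^{-\gamma_S}=(t^\alpha\ell(t))^{-\gamma_S}=\overline F_1(t)^{\gamma_S}$; finally $b(t)\sim\overline F_1^{\leftarrow}(1/t)$ gives $b^{\leftarrow}(t)\sim 1/\overline F_1(t)$, hence $\overline F_1(t)^{\gamma_S}=(1+o(1))(b^{\leftarrow}(t))^{-\gamma_S}$, and \eqref{eq:mainlimitmrvA} follows.

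\textbf{Main obstacle.} The delicate part is Step 2: producing the representation $[\boldsymbol\tau(t)]_s=u(t)+\bz_s/u(t)+\log(\mathcal L(t))/u(t)+o(1/u(t))$ with errors small enough to be absorbed by \Cref{prop:gausstail}. This rests on the second-order inversion of $\overline\Phi$ and, crucially, on the hypothesis $\log\ell(t)=o(\sqrt{\log t})$, which is used three times — to replace $\log(-2\log\overline F_s(x_st))$ by $\log(2\alpha\log t)+o(1)$, to make the quadratic correction in the square-root expansion $o(1/u(t))$, and to guarantee $\log(\mathcal L(t))/u(t)\to0$. Everything else is bookkeeping on top of \Cref{prop:gausstail} and \Cref{lem:qp}.
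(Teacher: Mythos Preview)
Your proof is correct and follows the same route as the paper: transform to a Gaussian event, expand $\overline\Phi^{\leftarrow}(\overline F_s(x_st))$ to the form required by \Cref{prop:gausstail} (the paper isolates this as \Cref{lem:paretotogaussquantile}), apply the proposition, and unwind. The only cosmetic difference is the bookkeeping: you put the constant $\tfrac12\log(4\pi\alpha)$ into $\mathcal L(t)=\ell(t)/(\sqrt{2\pi}\,u(t))$ and take $\bz_s=\alpha\log x_s$, whereas the paper keeps $\mathcal L(t)=\ell(t)/\sqrt{\log t}$ and puts the constant into $\bz_s=\log\bigl(x_s^{\alpha}/(2\sqrt{\pi\alpha})\bigr)$, which then requires the identity $\sum_{s\in I_S}h_s^S=\gamma_S$ from \Cref{lem:qp} at the simplification stage---your allocation sidesteps that step but is otherwise identical.
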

\begin{remark}
Any distribution $F_1$ with $\ov F_1\in\RV_{-\alpha}$ has the representation $\overline{F}_1(t)=(t^{\alpha}\ell(t))^{-1}$  with $\ell\in \RV_0$.
However, for our results we require the additional assumption $\log(\ell(t))=o(\sqrt{\log(t)})$ as $t\to\infty$. Fortunately, many regularly varying distributions satisfy this property, in particular with $\ell(t)\sim c$ for some constant $c>0$ as, e.g., Pareto, Burr, L\'evy, Student's $t$ and Fr\'ech\'et distributions  (cf. \cite{embrechts:kluppelberg:mikosch:1997, nair:wierman:zwart:2015}).
   
\end{remark}

For the proof of \Cref{prop:premain} we require a relationship between the quantiles of a regularly varying distribution and that of a normal distribution, which we derive next.

\begin{lemma}\label{lem:paretotogaussquantile}
    Let $F_\alpha$ be a distribution function with
    $\ov{F}_{\alpha}(t)=(t^{\alpha}\ell(t))^{-1}$ where $\ell\in \RV_0$ for some $\alpha>0$. 
    Suppose that $F_\alpha$ is strictly increasing and continuous. Fix $x>0$ and define for any $t>0$  $$z_{x}(t)=\Phi^{-1}(F_\alpha(tx))=\overline{\Phi}^{-1}(\overline{F}_\alpha(tx)).$$ 
    Then as $t\to\infty$,
    \begin{align*}
      z_x(t)  =  \sqrt{2\alpha \log (t)} &+   \frac{1}{\sqrt{2\alpha\log (t)}}  \log\left(\frac{\ell(t)}{\sqrt{\log(t)}}\right) \nonumber \\
      &+   \frac{1}{\sqrt{2\alpha\log (t)}} \log\left(\frac{x^{\alpha}}{2\sqrt{\pi\alpha}}\right)+ o\left(\frac1{\sqrt{\log(t)}}\right).
    \end{align*} 
  \end{lemma}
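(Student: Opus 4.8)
The plan is to convert the defining relation $\overline\Phi(z_x(t))=\overline F_\alpha(tx)$ into an asymptotic equation for $z_x(t)$ by applying the classical Gaussian tail (Mill's ratio) expansion and then solving for $z_x(t)$ by a one‑pass bootstrap. Write $z:=z_x(t)$ and note that $z\to\infty$ as $t\to\infty$, since $\overline F_\alpha(tx)\to 0$. Taking logarithms in $\overline\Phi(z)=\overline F_\alpha(tx)=\big((tx)^\alpha\ell(tx)\big)^{-1}$ and using the sharp form $\log\overline\Phi(z)=-\tfrac{z^2}{2}-\log z-\tfrac12\log(2\pi)+O(z^{-2})$, I would rearrange to
\[
\frac{z^2}{2}=\alpha\log t+\alpha\log x+\log\ell(t)-\log z-\tfrac12\log(2\pi)+o(1),
\]
where I use that $\ell\in\RV_0$ lets me replace $\log\ell(tx)$ by $\log\ell(t)+o(1)$ for fixed $x$, and that $O(z^{-2})=o(1)$. (Equivalently one could invoke the standard asymptotic expansion of the Gaussian quantile $\overline\Phi^{-1}$, but the Mill's ratio route is self‑contained.)

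The next step is a bootstrap to pin down the $\log z$ term: the displayed equation together with $\log\ell(t)=o(\log t)$ (automatic for slowly varying $\ell$) gives $z^2=(2\alpha+o(1))\log t$, hence $\log z=\tfrac12\log\log t+\tfrac12\log(2\alpha)+o(1)$. Feeding this back and gathering constants via $\tfrac12\log(2\alpha)+\tfrac12\log(2\pi)=\log(2\sqrt{\pi\alpha})$ yields
\[
\frac{z^2}{2}=\alpha\log t+\log\!\big(\ell(t)/\sqrt{\log t}\big)+\log\!\big(x^\alpha/(2\sqrt{\pi\alpha})\big)+o(1).
\]
Finally I would factor $2\alpha\log t$ out of the right‑hand side and take the square root via $\sqrt{1+u}=1+\tfrac u2+O(u^2)$ with $u=R(t)/(\alpha\log t)$, where $R(t)=\log(\ell(t)/\sqrt{\log t})+\log(x^\alpha/(2\sqrt{\pi\alpha}))+o(1)$ collects the lower‑order terms; the linear term $\sqrt{2\alpha\log t}\cdot u/2$ reproduces exactly the two correction terms in the statement and the quadratic term is absorbed into the remainder.

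The main obstacle is the error bookkeeping, which is concentrated at two places. First, one must check that the remainder from the square‑root expansion, of size $O\big(R(t)^2/(\log t)^{3/2}\big)$, is $o(1/\sqrt{\log t})$; this holds exactly when $R(t)^2=o(\log t)$, i.e. when $\log\ell(t)=o(\sqrt{\log t})$ — the growth condition on $\ell$ under which this lemma is applied (cf. \Cref{prop:premain}) — so mere slow variation of $\ell$ does not suffice and this hypothesis must be invoked here. Second, one should confirm that the accumulated $o(1)$'s (from the Mill's ratio remainder $O(z^{-2})=O(1/\log t)$, from $\log\ell(tx)-\log\ell(t)$, and from the $\log z$ bootstrap) each contribute only $o(1/\sqrt{\log t})$ after division by $\sqrt{2\alpha\log t}$, which they do. Everything else is routine manipulation of logarithms.
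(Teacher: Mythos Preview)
Your argument is correct and essentially parallels the paper's: there the authors quote the standard Gaussian quantile expansion $\overline\Phi^{-1}(1/s)=\sqrt{2\log s}-(\log\log s+\log 4\pi)/(2\sqrt{2\log s})+o(1/\sqrt{\log s})$, substitute $s=(tx)^\alpha\ell(tx)$, and Taylor-expand the resulting square roots --- exactly your Mill's-ratio-plus-bootstrap route with the quantile formula taken off the shelf rather than rederived. Your observation that the square-root remainder $O(R(t)^2/(\log t)^{3/2})$ is $o(1/\sqrt{\log t})$ only under the extra hypothesis $\log\ell(t)=o(\sqrt{\log t})$ is spot on and in fact slightly sharper than the paper's own treatment: the paper records the Taylor remainder in its term $A_t$ as $o(1/\log t)$, which silently uses the same condition, and this growth restriction on $\ell$ is indeed assumed in every downstream application of the lemma.
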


\begin{proof}  From normal quantile approximations (cf. \citet[p. 61, Ex 2.9]{dehaan:ferreira:2006}, \citet[p. 145]{embrechts:kluppelberg:mikosch:1997}) we have as $t\to\infty$,
    \begin{align*}
    \overline{\Phi}^{-1}(t^{-1})& = \Phi^{-1}(1-t^{-1}) = \sqrt{2\log (t)} - \frac{\log(\log(t))+\log(4\pi)}{2\sqrt{2\log t}} + o\left(\frac1{\sqrt{\log(t)}}\right).
    \end{align*}
    Using the representation $\overline{F}_\alpha(t) = \frac{1}{t^{\alpha}\ell(t)}$ for some slowly varying function $\ell$, we obtain 
    \begin{align}
        z_x(t) & =\overline{\Phi}^{-1}\left([(tx)^{\alpha}\ell(tx)]^{-1}\right) \nonumber \\   
               &  = \sqrt{2\alpha\log (tx) + 2\log(\ell(tx))}  - \frac{\log(\log(tx (\ell(tx)^{1/\alpha})))+ \log(\alpha) + \log(4\pi)}{2\sqrt{2\alpha\log (tx) + 2\log(\ell(tx))}} \nonumber \\ & \quad\quad\quad\quad\quad\quad \quad\quad\quad\quad\quad\quad\quad\quad\quad\quad\quad\quad+  o\left(\frac1{\sqrt{2\alpha\log (tx) + 2\log(\ell(tx))}}\right) \nonumber \\
               & =: A_t - B_t + C_t \quad \text{(say)}. \label{quantile}
    \end{align}
 A Taylor series expansion gives
\begin{align*}
    A_t & = \sqrt{2\alpha\log(t)}\left(1+ \frac{\log(x)+ \frac1{\alpha}\log(\ell(tx))}{\log(t)}\right)^{1/2}\\
        & = \sqrt{2\alpha\log(t)}\left[1+\frac{1}{2}\frac{\log(x)}{\log(t)}+ \frac{1}{2\alpha}\frac{\log(\ell(tx))}{\log(t)} +  o\left(\frac1{{\log(t)}}\right)\right] \nonumber\\
        & =  \sqrt{2\alpha \log (t)} +   \frac{\log(x^{\alpha})}{\sqrt{2\alpha\log (t)}} +  \frac{\log(\ell(t))}{\sqrt{2\alpha\log (t)}}  +  o\left(\frac1{\sqrt{\log(t)}}\right).
\end{align*}
Note, in the last step we used $\ell\in\RV_0$ and hence, $\log(\ell(xt))=\log(\ell(x))+o(1)$ as $t\to\infty$.
Similarly, by a Taylor series expansion we have 
\begin{align*}
    B_t & = \frac{\log(\log(t))}{2\sqrt{2\alpha\log t}} + \frac{\log(4\pi\alpha)}{2\sqrt{2\alpha \log t}}  + o\left(\frac1{\sqrt{\log(t)}}\right)\\
       & =  \frac{\log(\sqrt{\log(t)})}{\sqrt{2\alpha\log t}} + \frac{\log(2\sqrt{\pi\alpha})}{\sqrt{2\alpha \log t}}  + o\left(\frac1{\sqrt{\log(t)}}\right).
\end{align*}
Clearly $C_t=o((\log(t))^{-1/2})$ as $t\to\infty$.
Hence, combining $A_t, B_t, C_t$ and \eqref{quantile} gives the result.
\end{proof}

\begin{proof}[Proof of \Cref{prop:premain}] 
For $A_{\bx_S}$ as defined in \eqref{def:Axs}, we have
\begin{align*}
    \P(\bX\in tA_{\bx_S}) & = \P(X_s > tx_s, \forall s \in S) = \P\left(Z_s > z_{x_s}(t), \forall s \in S \right),
\end{align*}
where $z_{x_s}(t)=\ov \Phi^{-1}(\ov F_{s}(t x_s))$, $s\in S$, and $\bZ_S=(Z_s)_{s\in S}\sim \Phi_{\Sigma_S}$. By tail equivalence we have $\ov{F}_s(t)\sim \ov{F}_1(t) = (t^{\alpha}\ell(t))^{-1}$. 
Defining $\bz_S := \left(\log\left({x_s^{\alpha}}/{2\sqrt{\pi\alpha}}\right)\right)_{s\in S}$ and applying \Cref{lem:paretotogaussquantile} we get 
\begin{align} 
    & \P(\bX\in tA_{\bx_S}) \nonumber \\
    & = \P\left(\bZ_S> \sqrt{2\alpha \log (t) } \bone + \frac{\bz_S}{\sqrt{2\alpha \log (t)}} + \frac{\log\left(\frac{\ell(t)}{\sqrt{\log(t)}}\right)}{\sqrt{2\alpha \log (t)}}\bone + o\left(\frac{1}{\sqrt{\log(t)}}\right)\right) \nonumber\\
                  & = \P\left(\bZ_S> u(t) \bone + \frac{\bz_S}{u(t)} + \frac{\log(\mathcal{L}(t))}{u(t)}\bone + o\left(\frac{1}{u(t)}\right)\right), \label{eq4}
\end{align}
where $u(t)=\sqrt{2\alpha\log(t)}$ and $\mathcal{L}(t)= \frac{\ell(t)}{\sqrt{\log(t)}}$. Since by assumption $\log(\ell(t))=o(\sqrt{\log(t)})$ as $t\to\infty$, we have $\lim_{t\to\infty}\log(\mathcal{L}(t))/u(t)=0$.  Using \Cref{prop:gausstail}
and $h_s^S=e_s^\top\Sigma_S^{-1}\be_S^*$, $s\in S$,  we get as $t\to \infty$,
\begin{align*}
    \P(\bX\in tA_{\bx_S})  & = (1+o(1)) \Upsilon_S(\sqrt{2\alpha\log(t)})^{-|I_S|} (\mathcal{L}(t))^{-\gamma_S}\exp\left(-\gamma_S\alpha \log (t)- \bz_S^{\top}\Sigma_S^{-1}e_S^*\right)\\
     & = (1+o(1))\Upsilon_S(2\alpha)^{-\frac{|I_S|}{2}} (\log (t))^{\frac{\gamma_S-|I_S|}{2}} (\ell(t))^{-\gamma_S}\\ & \quad\quad\quad\quad\quad\quad\quad \quad\quad\quad \times t^{-\alpha\gamma_S}  \prod_{s\in I_S} \exp\left(-h_s^S\log\left(\frac{x_s^{\alpha}}{2\sqrt{\pi\alpha}}\right)\right)\\
             & = (1+o(1))\Upsilon_S(2\alpha)^{-\frac{|I_S|}{2}} (\log (t))^{\frac{\gamma_S-|I_S|}{2}} (\ell(t)t^{\alpha})^{-\gamma_S} (2\sqrt{\pi\alpha})^{\gamma_S} \prod_{s\in I_S} x_s^{-\alpha h_s^S}\\
              & = (1+o(1))\Upsilon_S({2\pi})^{\frac{\gamma_S}2}  (2\alpha\log (t))^{\frac{\gamma_S-|I_S|}{2}}(b^{\leftarrow}(t))^{-\gamma_S}  \prod_{s\in I_S} x_s^{-\alpha h_s^S}, 
\end{align*}
where in the penultimate equality we used  that $\sum_{s\in I_S}h_s^S=\bone_{I_S}^{\top}\Sigma^{-1}_{I_S}\bone_{I_S}=\gamma_S$ due to  \Cref{lem:qp}.
\end{proof}
We are now able to present the main theorem of the paper which characterizes regular variation of $\bX\in \RVGC(\alpha,b,  \Sigma)$ on subspaces $\E_d^{(i)}, i=1,\ldots, d$. 

\begin{theorem}\label{thm:hrvgc}
Let $\bX\sim F$ with $\bX\in \RVGC(\alpha, b, \Sigma)$ where $\Sigma$ is positive definite. Suppose that $\log(\ell(t))=o(\sqrt{\log(t)})$ as $t\to\infty$ where $\overline{F}_1(t)=(t^{\alpha}\ell(t))^{-1}$.  For any non-empty set $S\subseteq\mathbb{I}=\{1,\ldots, d\}$, let $\gamma_S:=\gamma(\Sigma_S)$, $I_S:=I(\Sigma_S)$, $\be_S^*:=\be^*(\Sigma_S)$, $\Upsilon_S:=\Upsilon(\Sigma_S)$ and $h_{s}^S:=h_s(\Sigma_S), s\in I_S$ be defined as in  \Cref{prop:gausstail} (and in \Cref{lem:qp}). 
\begin{itemize}
\item[(a)] Let $i=1$. Then $\bX\in \MRV(\alpha,b,\mu_1,\E_d^{(1)})$ with 
\[\mu_1([\bzero,\bx]^c) = \sum_{j=1}^d x_{j}^{-\alpha}, \qquad \forall\; \bx\in \R_+^d. \]
\item[(b)] Let $2\le i \le d$. 
Define 
\begin{align*}
   \mathcal{S}_i&:=\left\{ S\subseteq \mathbb{I}: \; |S|\geq i, \bone_{I_S}\Sigma^{-1}_{I_S}\bone_{I_S}=\min_{\widetilde S\subseteq \mathbb{I}, |\widetilde S|\ge i} \bone_{I_{\widetilde S}}\Sigma^{-1}_{I_{\widetilde S}}\bone_{I_{\widetilde S}}\right\},\\
    I_i&:=\arg \min_{S\in\mathcal{S}_i}|I_{S}|,
\end{align*}
where $I_i$ is not necessarily unique. 
Then $\bX\in \MRV(\alpha_i,b_i,\mu_i,\E_d^{(i)})$ where
 \begin{align*}
    \gamma_i&=\gamma(\Sigma_{I_i})=
    \bone_{I_i}^{\top}\Sigma_{I_i}^{-1}\bone_{I_i}=\min_{S\subseteq \mathbb{I}, |S| \ge i} \min_{\bx_S \ge \bone_S} \bx_S^{\top} \Sigma_S^{-1} \bx_S,\\
   \alpha_i&= 
   \alpha\gamma_i,\\
   b_i^{\leftarrow}(t) &= ({2\pi})^{-\frac{\gamma_i}2} (2\alpha\log (t))^{\frac{|I_i|-\gamma_i}{2}}(b^{\leftarrow}(t))^{\gamma_i},
\end{align*}
 and for a set $A_{\bx_S}=\{\by\in \R^d_+: y_s>x_s, \forall\,s\in S\}$ with $x_s>0, \forall s\in S\subseteq \mathbb{I}$ and $|S|\ge i$, we have
\begin{align}\label{def:muiAxs}
    \mu_i(A_{\bx_S})= \begin{cases}
                   \Upsilon_{S}  \prod_{s\in I_S} x_s^{-\alpha h_s^S}
    & \text{ if }  S\in \mathcal{S}_i \text{ and } |I_S|=|I_i|, \\ 
                   0  &\text{otherwise}. \end{cases} 
\end{align}
\end{itemize}
\end{theorem}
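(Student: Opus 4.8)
The proof combines the exact rectangular-set asymptotics of \Cref{prop:premain} with the rectangular-set characterization of $\M$-convergence on $\E_d^{(i)}$ in \Cref{Lemma 2.3}; throughout one uses $1/b^{\leftarrow}(t)\sim\overline F_1(t)$, $b^{\leftarrow}\in\RV_\alpha$, and $\overline F_1\in\RV_{-\alpha}$. For part (a), write $\P(\bX\in t[\bzero,\bx]^c)=\P(\exists\,j:\ X_j>tx_j)$ and expand by inclusion--exclusion. The singleton terms contribute $\sum_j\P(X_j>tx_j)\sim\overline F_1(t)\sum_j x_j^{-\alpha}$ by tail equivalence and marginal regular variation, while every term indexed by a set $S$ with $|S|\ge 2$ is, by \Cref{prop:premain}, of order $(\log t)^{(\gamma_S-|I_S|)/2}\overline F_1(t)^{\gamma_S}=o(\overline F_1(t))$ since $\gamma_S>1$. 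Hence $\P(\bX\in t[\bzero,\bx]^c)\sim\overline F_1(t)\sum_j x_j^{-\alpha}$, and $t\,\overline F_1(b(t)x)\to x^{-\alpha}$ gives $t\,\P(\bX\in b(t)[\bzero,\bx]^c)\to\sum_j x_j^{-\alpha}$. The same dichotomy shows $t\,\P(\bX/b(t)\in A_{\bx_S})$ tends to $x_j^{-\alpha}$ if $S=\{j\}$ and to $0$ if $|S|\ge 2$, so \Cref{Lemma 2.3} yields $\bX\in\MRV(\alpha,b,\mu_1,\E_d^{(1)})$ with $\mu_1$ as stated.

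For part (b), fix $2\le i\le d$. \Cref{prop:premain} gives, for $|S|\ge i$,
\[\P(\bX\in tA_{\bx_S})=(1+o(1))\,\Upsilon_S(2\pi)^{\gamma_S/2}(2\alpha\log t)^{(\gamma_S-|I_S|)/2}(b^{\leftarrow}(t))^{-\gamma_S}\prod_{s\in I_S}x_s^{-\alpha h_s^S}.\]
Since $(b^{\leftarrow}(t))^{-\gamma_S}\in\RV_{-\alpha\gamma_S}$ dominates the slowly varying log-factor, the slowest-decaying tail sets are singled out by a two-stage minimization: first minimise $\gamma_S$ over $\{S:|S|\ge i\}$ — this produces $\gamma_i$, which by \Cref{lem:qp} equals the stated double minimum over $\bx_S\ge\bone_S$, together with the class $\mathcal{S}_i$ of minimisers — and then minimise $|I_S|$ over $S\in\mathcal{S}_i$, producing $|I_i|$. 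Setting $\alpha_i:=\alpha\gamma_i$ and defining $b_i$ via $b_i^{\leftarrow}(t)=(2\pi)^{-\gamma_i/2}(2\alpha\log t)^{(|I_i|-\gamma_i)/2}(b^{\leftarrow}(t))^{\gamma_i}$, one checks $b_i^{\leftarrow}\in\RV_{\alpha_i}$, hence $b_i\in\RV_{1/\alpha_i}$ and $b_i(t)\to\infty$.

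The core of the argument is to evaluate $\lim_{t\to\infty}t\,\P(\bX\in b_i(t)A_{\bx_S})$ for every $S$ with $|S|\ge i$. Writing $s=b_i(t)$ and inverting the definition of $b_i$, one expresses $b^{\leftarrow}(s)$ and $\log s$ through $t$ (using $b^{\leftarrow}\in\RV_\alpha$, $b_i^{\leftarrow}\in\RV_{\alpha_i}$, so that $\alpha\gamma_i\log s\sim\log t$); substituting into the display above, the constants $(2\pi)^{\pm\gamma_S/2}$ cancel, the power of $t$ becomes $1-\gamma_S/\gamma_i$, and on the locus $\gamma_S=\gamma_i$ the exponent of $\log t$ collapses to $(|I_i|-|I_S|)/2$. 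Hence the limit is $0$ unless $S\in\mathcal{S}_i$ and $|I_S|=|I_i|$, in which case it equals $\Upsilon_S\prod_{s\in I_S}x_s^{-\alpha h_s^S}$ — exactly \eqref{def:muiAxs}.

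It then remains to check that the set function \eqref{def:muiAxs} is the restriction to rectangular sets of an honest non-null Borel measure $\mu_i$ on $\E_d^{(i)}$ that is finite on sets bounded away from $\{\by:y_{(i)}=0\}$ and homogeneous of order $-\alpha_i$ (verifying in particular that the relevant boundary hyperplanes are $\mu_i$-null so that \Cref{Lemma 2.3} applies); this can be done by exhibiting $\mu_i$ explicitly as the finite sum, over those $S\in\mathcal{S}_i$ with $|I_S|=|I_i|$, of product power-law measures carried by the coordinate subspace $\{\by:y_j=0\ \forall j\notin I_S\}$, and matching their rectangular masses. With $\mu_i$ in hand, \Cref{Lemma 2.3} delivers $\bX\in\MRV(\alpha_i,b_i,\mu_i,\E_d^{(i)})$. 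The two steps I expect to be the main obstacle are: (i) the asymptotic bookkeeping in the last substitution, i.e.\ showing the $\log t$-exponent collapses to precisely $(|I_i|-|I_S|)/2$ so the contributing $S$ are exactly the stated ones; and (ii) checking that \eqref{def:muiAxs} does not degenerate — concretely, that the minimising $S$ must satisfy $I_S=S$, so that the candidate $\mu_i$ has the product form above and is genuinely non-null.
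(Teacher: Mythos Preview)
Your overall strategy --- combine the rectangular-set asymptotics of \Cref{prop:premain} with the rectangular characterization in \Cref{Lemma 2.3}, then run a two-stage (first $\gamma_S$, then $|I_S|$) case analysis --- is exactly what the paper does. Part (a) is identical in spirit.

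For part (b) there is one needless complication and one genuine slip. The complication is your substitution $s=b_i(t)$ followed by inversion. The paper instead uses the standard equivalence
\[
\lim_{t\to\infty} t\,\P\!\left(\tfrac{\bX}{b_i(t)}\in A_{\bx_S}\right)=\lim_{t\to\infty} b_i^{\leftarrow}(t)\,\P(\bX\in tA_{\bx_S}),
\]
and then simply multiplies the explicit $b_i^{\leftarrow}(t)$ against the asymptotic from \Cref{prop:premain}. All powers of $2\pi$, $\log t$, and $b^{\leftarrow}(t)$ cancel by inspection, yielding your three cases ($\gamma_S>\gamma_i$; $\gamma_S=\gamma_i$ with $|I_S|>|I_i|$; $\gamma_S=\gamma_i$ with $|I_S|=|I_i|$) in two lines. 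Your ``obstacle (i)'' therefore dissolves once you work with $b_i^{\leftarrow}$ rather than invert $b_i$.

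The slip is in ``obstacle (ii)'': it is \emph{not} true that the minimising $S$ must satisfy $I_S=S$. In \Cref{ex:trigauspar}(b)(ii) with $i=3$ the only candidate is $S=\{1,2,3\}$ and there $I_S=\{1,2\}\subsetneq S$; consequently $\mu_3(A_{\bx_S})=\Upsilon_S\,(x_1x_2)^{-\alpha/(1+\rho)}$ is independent of $x_3$, and $\Upsilon_S$ carries the factor $\P(Y_J\ge\bl_\infty)$ with $J=S\setminus I_S\ne\emptyset$. So your proposed representation of $\mu_i$ as a sum of product power-law measures on $\{\by:y_j=0\ \forall j\notin I_S\}$ is not the right picture (mass can sit ``at infinity'' in the $S\setminus I_S$ coordinates rather than at $0$). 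The paper does not construct $\mu_i$ explicitly at all: it notes $\mu_i(\partial A_{\bx_S})=0$ for the formula \eqref{def:muiAxs} and invokes \Cref{Lemma 2.3} (i.e., \cite[Proposition 2.7]{das:fasen:2023}), which takes convergence on rectangular sets as \emph{input} and delivers $\M$-convergence, the limit measure coming from the abstract theory rather than from an explicit construction.
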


\begin{proof} $\mbox{}$\\
(a) \; For any $\bx\in \R^d_+$, the set $A=[\bzero,\bx]^c$ satisfies $\mu_1(\partial A)=0$ with $\mu_1$ as above. By the inclusion-exclusion principle,  
\[ t\,\P\left(\frac{\bX}{b(t)}\in [\bzero,\bx]^c\right) \le  \sum_{j=1}^d t\,\P(X_j>b(t)x_j) \stackrel{t\to\infty}{\longrightarrow} \sum_{j=1}^d x_{j}^{-\alpha}  \]
by definition. And again, by the inclusion-exclusion principle,  
\begin{align*}
t\,\P\left(\frac{\bX}{b(t)}\in [\bzero,\bx]^c\right) & \ge  \sum_{j=1}^d t\,\P(X_j>b(t)x_j) - \sum_{\genfrac{}{}{0pt}{}{j,k=1}{j\neq k}}^d t\,\P(X_j>b(t)x_j,X_k>b(t)x_k)\\ 
   & \stackrel{t\to\infty}{\longrightarrow} \sum_{j=1}^d x_{j}^{-\alpha},
\end{align*}
since $\P(X_j>b(t)x_j,X_k>b(t)x_k)=o(t)$ for $\rho_{jk}<1$ where $\Sigma=((\rho_{rs}))_{1\leq r,s\leq d}$ (cf. \citet{hua:joe:li:2014}). Therefore $\bX\in \MRV(\alpha,b,\mu_1,\E_d^{(1)})$ follows by \citet[Lemma 6.1]{resnickbook:2007} or \citet[Proposition 2.7]{das:fasen:2023}.

(b) \; Let $i\in \{2,\ldots,d\}$. By \Cref{Lemma 2.3}, it suffices to show that    
\begin{align}\label{lim:mui} \lim_{t\to\infty}t \P\left(\frac{\bX}{b_i(t)}\in A_{\bx_S}\right) =  \lim_{t\to\infty} b_{i}^{\leftarrow}(t) \P\left(\bX\in tA_{\bx_S}\right) = \mu_i(A_{\bx_S})
\end{align}
for any (rectangular) set $A_{\bx_S}=\{\by\in \R_+^d: y_s > x_s, \forall s \in S\}$ where $S\subseteq \mathbb{I}, |S| \ge i$ and $x_s>0, \,\forall\, s\in S$ and $\mu_i(\partial A_{\bx_S})=0$.

Suppose now $A_{\bx_S}$ is such a rectangular set. By the definition of $\mu_i$ from above, for any such set $A_{\bx_S}$, we have $\mu_i(\partial A_{\bx_S})=0$.  From \Cref{prop:premain}, we have 
\begin{align*}
\P(\bX\in tA_{\bx_S})= (1+o(1))\Upsilon_S({2\pi})^{\frac{\gamma_S}2}  (2\alpha\log (t))^{\frac{\gamma_S-|I_S|}{2}}(b^{\leftarrow}(t))^{-\gamma_S}  \prod_{s\in I_S} x_s^{-\alpha h_s^S}.
\end{align*}
 Furthermore, by  definition $\gamma_i=\gamma_{I_i}\le \gamma_S$ for any $S\subseteq \mathbb{I}$ with $|S|\ge i$. \\
 \underline{Case 1:} Suppose $ S\notin \mathcal{S}_i$. Then $\gamma_i<\gamma_S$ and  $\alpha_i=\alpha\gamma_i<\alpha\gamma_S$.  Thus, the function \linebreak $g(t):= b_i^{\leftarrow}(t)\P(\bX\in tA_{\bx_S})\in\RV_{\alpha_i-\alpha\gamma_S}$ converges to $0$  and we have
\begin{align} \label{eq3}
   \lim_{t\to\infty} b_i^{\leftarrow}(t)\P(\bX\in tA_{\bx_S}) =0.
\end{align}
 \underline{Case 2:} Suppose $ S\in\mathcal{S}_i$ and $|I_S|>|I_i|$. Then   $\alpha_i=\alpha\gamma_S$ as well. Although $g\in\RV_0$ we have $g(t)\sim C(2\alpha\log(t))^{\frac{|I_i|-|I_S|}{2}}$ as $t\to\infty$ and hence, again \eqref{eq3} holds.\\
  \underline{Case 3:} 
Suppose $ S\in\mathcal{S}_i$ and $|I_S|=|I_i|$. Then    $\alpha_i=\alpha\gamma_S$ and therefore,
\begin{align*}
   \lim_{t\to\infty} b_i^{\leftarrow}(t)\P(\bX\in tA_{\bx_S}) = \Upsilon_S \prod_{s\in I_S} x_s^{-\alpha h_s^S}.
\end{align*}
Thus, Cases 1-3 show \eqref{lim:mui} for a rectangular set $A_{\bx_S}$ with $|S|\ge i$ and $\mu_i(\partial A_{\bx_S})=0$, hence proving the result.
\end{proof}

\begin{remark} Note that even if we have $S_1, S_2\in\mathcal{S}_i$, it may not necessarily imply that $|I_{S_1}|\not=|I_{S_2}|$. This has an effect on the probability approximation for different tail sets in the same subspace $\E_d^{(i)}$. Consider $\bX\in \R^6$ with $\bX\in \RVGC(\alpha, \Sigma)$ such that   
\begin{align*}
    \Sigma=\left(\begin{array}{c@{\quad}c@{\quad}c@{\quad}c@{\quad}c@{\quad}c}
1 & 0.6 & \sqrt{2}\times 0.6 & 0 & 0 & 0 \\
0.6 & 1 & \sqrt{2}\times 0.6 & 0 & 0 & 0  \\
\sqrt{2}\times 0.6  & \sqrt{2}\times  0.6  & 1 & 0 & 0 & 0  \\
0 & 0 & 0 & 1 & 0.7 & 0.7\\
0 & 0 & 0 & 0.7 & 1 & 0.7\\
0 & 0 & 0 & 0.7 & 0.7 & 1
\end{array}\right).
\end{align*}
For regular variation on $\E_6^{(3)}$ we can check that $\mathcal{S}_3=\{\{1,2\},\{4,5,6\}\}$ with $\gamma_3=1.25$.
But $|\{1,2\}|=2$ and $|\{4,5,6\}|=3$, and hence, $I_3=\{1,2\}$ and $\alpha_3=\alpha\gamma_3$. Interestingly, for fixed $x_1,x_2,x_3,x_4,x_5,x_6>0$  we have both $g_1(t)=\P(X_1>tx_1,X_2>tx_2,X_3>tx_3)\in\RV_{-\alpha_3}$ and 
$g_2(t)=\P(X_4>tx_4,X_5>tx_5,X_6>tx_6)\in\RV_{-\alpha_3}$, yet,
\beao
    \lim_{t\to\infty}\frac{\P(X_4>tx_4,X_5>tx_5,X_6>tx_6)}{\P(X_1>tx_1,X_2>tx_2,X_3>tx_3)}=0,
\eeao
since $g_1$ and $g_2$ have different slowly varying components.
\end{remark}

Besides rectangular sets, \Cref{thm:hrvgc} may be used to compute asymptotic probabilities for sets of the form $B_{\bx,i}$ as defined in \eqref{set:Bx}. We formalize it in the following corollary.
\begin{corollary}\label{corr:mainBx}
Let the assumptions and notations of \Cref{thm:hrvgc} hold. For \linebreak $\bx=(x_1,\ldots,x_d)>\bzero$ and $i\in\mathbb{I}$ define the set
    \begin{align*}
        B_{\bx,i} = \{\by\in \R^d_+: y_j>x_j \text{ for at least $i$ components of $\bx$}\}.
    \end{align*}
    Then  as $t\to\infty$, 
    \begin{align*}
        \P(\bX\in tB_{\bx,i}) = (1+o(1))({2\pi})^{\frac{\gamma_i}2}  (2\alpha\log (t))^{\frac{\gamma_i-|I_i|}{2}}(b^{\leftarrow}(t))^{-\gamma_i} \mu_i(B_{\bx,i}).
    \end{align*}
   If $\alpha_{i+1}>\alpha_i$ then 
    \begin{align*}
         \mu_i(B_{\bx,i})=\sum_{\genfrac{}{}{0pt}{}{S\in\mathcal{S}_i,|S|=i }{|I_S|=|I_i|}}  \Upsilon_{S}  \prod_{s\in I_S} x_s^{-\alpha h_s^S}.
    \end{align*}
\end{corollary}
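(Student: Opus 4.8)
The plan is to read off both displays from \Cref{thm:hrvgc}, supplemented by a short inclusion--exclusion computation for the limit measure $\mu_i$; no new tail estimate is needed. For the first display, note that since $\bx>\bzero$ the set $B_{\bx,i}$ is a Borel subset of $\E_d^{(i)}$ bounded away from $\{\by\in\R_+^d:y_{(i)}=0\}$, and it is a $\mu_i$-continuity set: $\partial B_{\bx,i}$ is contained in the finite union $\bigcup_{j=1}^d\{\by:y_j=x_j\}$, each member of which is $\mu_i$-null because, as recorded in the proof of \Cref{thm:hrvgc}, the boundary of every rectangular set $A_{\bx_S}$ has $\mu_i$-measure zero. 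Hence \Cref{thm:hrvgc} gives $t\,\P(\bX/b_i(t)\in B_{\bx,i})\to\mu_i(B_{\bx,i})$ as $t\to\infty$, and inverting this relation exactly as in the Remark following \Cref{Lemma 2.3} --- whose argument uses only that $b_i\in\RV_{1/\alpha_i}$ with asymptotic inverse $b_i^{\leftarrow}\in\RV_{\alpha_i}$ and which applies verbatim to any $\mu_i$-continuity set --- yields $\P(\bX\in tB_{\bx,i})=(1+o(1))\,\mu_i(B_{\bx,i})/b_i^{\leftarrow}(t)$. Substituting the explicit expression $b_i^{\leftarrow}(t)=(2\pi)^{-\gamma_i/2}(2\alpha\log t)^{(|I_i|-\gamma_i)/2}(b^{\leftarrow}(t))^{\gamma_i}$ from \Cref{thm:hrvgc} then produces the first formula.

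For the second display I would write $B_{\bx,i}=\bigcup_{S:\,|S|=i}A_{\bx_S}$ and apply finite inclusion--exclusion to $\mu_i$ over the family $\{S\subseteq\mathbb{I}:|S|=i\}$. Since $\bigcap_{S\in\mathcal{T}}A_{\bx_S}=A_{\bx_{U(\mathcal{T})}}$ with $U(\mathcal{T}):=\bigcup_{S\in\mathcal{T}}S$, this gives $\mu_i(B_{\bx,i})=\sum_{\emptyset\neq\mathcal{T}}(-1)^{|\mathcal{T}|-1}\mu_i(A_{\bx_{U(\mathcal{T})}})$. The load-bearing observation is that whenever $|\mathcal{T}|\geq2$ the union $U(\mathcal{T})$ of two or more distinct $i$-element sets satisfies $|U(\mathcal{T})|\geq i+1$, so using the identity $\gamma(\Sigma_U)=\bone_{I_U}^\top\Sigma_{I_U}^{-1}\bone_{I_U}$ from \Cref{lem:qp} together with the characterization $\gamma_{i+1}=\min_{S\subseteq\mathbb{I},\,|S|\geq i+1}\gamma(\Sigma_S)$ read off from \Cref{thm:hrvgc}, one has $\gamma(\Sigma_{U(\mathcal{T})})\geq\gamma_{i+1}=\alpha_{i+1}/\alpha>\alpha_i/\alpha=\gamma_i$ under the hypothesis $\alpha_{i+1}>\alpha_i$; hence $U(\mathcal{T})\notin\mathcal{S}_i$ and \eqref{def:muiAxs} forces $\mu_i(A_{\bx_{U(\mathcal{T})}})=0$. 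Only the singleton terms $\mathcal{T}=\{S\}$ with $|S|=i$ survive, so $\mu_i(B_{\bx,i})=\sum_{|S|=i}\mu_i(A_{\bx_S})$, and a final application of \eqref{def:muiAxs} --- which vanishes unless $S\in\mathcal{S}_i$ and $|I_S|=|I_i|$ --- gives the stated sum.

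The routine parts here are the $\mu_i$-continuity-set check and the inversion of the regularly varying scale $b_i$; the only genuinely substantive step is the collapse of the inclusion--exclusion expansion to its singleton terms, which is precisely where the hypothesis $\alpha_{i+1}>\alpha_i$ (equivalently $\gamma_{i+1}>\gamma_i$) enters and which legitimizes the compact sum over size-$i$ subsets. I do not expect a serious obstacle beyond this bookkeeping; the one mild technical wrinkle is the degenerate case $\mu_i(B_{\bx,i})=0$ in the first display, which can be absorbed by bounding $\P(\bX\in tB_{\bx,i})$ above by the finite sum over $\{S:|S|=i\}$ of the rectangular-set asymptotics of \Cref{prop:premain} and checking that each such term is $o(1/b_i^{\leftarrow}(t))$.
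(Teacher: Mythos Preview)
Your proposal is correct and follows essentially the same approach as the paper: the paper's own proof consists of the single sentence ``The result follows from \Cref{thm:hrvgc} and the inclusion--exclusion principle,'' and you have simply spelled out those two ingredients, including the key observation that the hypothesis $\alpha_{i+1}>\alpha_i$ kills all higher-order inclusion--exclusion terms because any union of at least two distinct $i$-element subsets has cardinality $\geq i+1$ and hence falls outside $\mathcal{S}_i$. Your explicit treatment of the $\mu_i$-continuity check and the degenerate case $\mu_i(B_{\bx,i})=0$ go slightly beyond what the paper records, but they are routine and in the same spirit.
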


\begin{proof}
    The result follows from \Cref{thm:hrvgc} and the inclusion-exclusion principle.
\end{proof}

\subsection{Examples}\label{subsec:exmples}
We elaborate the results with a few examples next. In these examples we assume that all pairwise correlation $\rho_{jk}\in (-1,1), \forall j\neq k$ to avoid fully dependent pairs of variables.  We also use the choice of $b_i$ (or $b_i^{\leftarrow}$ to be precise) as described in \Cref{thm:hrvgc}, and do not explicitly specify them in the examples.

\bexam\label{ex:trigauspar} Here we elaborate on the case where $d=3$ with a general matrix  $\Sigma$, although explicit conditions and formulas are not always available without further assumptions on the parameters. 
Suppose $\bX=(X_1,X_2, X_3)\in \R^3_+$ with identical Pareto($\alpha$) marginals and $\bX\in \RVGC(\alpha,b, \Sigma)$ with $b(t)=t^{\frac{1}{\alpha}}$ and $$\Sigma=\begin{pmatrix}
1 & \;\;\;\; \rho_{12} &\;\;\;\; \rho_{13}\\
\rho_{12} & \;\;\;\; 1 & \;\;\;\; \rho_{23}\\
 \rho_{13} & \;\;\;\; \rho_{23} & \;\;\;\; 1
\end{pmatrix},$$ where  $\rho_{jk} \in (-1,1), \forall j\neq k$ and  $|\Sigma|=1-\rho_{12}^2-
\rho_{13}^2-\rho_{23}^2+2\rho_{12}\rho_{13}\rho_{23}>0$. Since the one and two dimensional principal minors of $\Sigma$ are positive, $|\Sigma|>0$ ensures that $\Sigma$ is positive definite. Moreover,
$$ \Sigma^{-1}= \frac{1}{|\Sigma|}\begin{pmatrix}
1-\rho_{23}^2 &\;\;\;\; \rho_{13}\rho_{23} - \rho_{12} &\;\;\;\; \rho_{12}\rho_{23}-\rho_{13}\\
 \rho_{13}\rho_{23} - \rho_{12} &\;\;\;\; 1-\rho_{13}^2 &\;\;\; \rho_{12}\rho_{13}-\rho_{23}\\
\rho_{12}\rho_{23}-\rho_{13} &\;\;\;\; \rho_{12}\rho_{13}-\rho_{23} &\;\;\;\; 1 -\rho_{12}^2
\end{pmatrix}.$$
To start with, let $\Sigma^{-1}\bone >\bzero$  and define $\bh=\Sigma^{-1}\bone$.  Then, using  \Cref{thm:hrvgc}, we have \linebreak $\bX\in \MRV(\alpha_i, b_i, \mu_i, \E_3^{(i)}), i=1,2,3$,
where 
\begin{align*}
     \alpha_1 & =\alpha,\quad \quad \alpha_2 = \alpha\gamma_2= \min_{1\le j\neq k \le 3}\frac{2\alpha}{1+\rho_{jk}} =\frac{2\alpha}{1+\rho_*}, \quad \quad \alpha_3  = \alpha\gamma_3= \alpha\bone^\top\Sigma^{-1}\bone.
    \end{align*}
      With  $\rho^*=\max_{j\neq k}(\rho_{jk})$ and $ \gamma_2= \frac{2}{1+\rho_*}$, for $\bx=(x_1, x_2, x_3)>\bzero$, we have
    \begin{align*} 
       & \mu_1(\{\by\in \R^3_+: y_j>x_j \})  = x_j^{-\alpha}, \quad 1\le j \le3,\\
   & \mu_2(\{\by\in \R^3_+: y_j>x_j, y_k>x_k\})  = \left\{
   \begin{array}{ll}
   \frac{(1+\rho_*)^{3/2}}{2\pi(1-\rho_*)^{1/2}} (x_jx_k)^{-\frac{\alpha}{1+\rho_{*}}}
   & \text{ if }\rho_{jk}=\rho_*,\\
   0 & \text{ if }\rho_{jk}\not=\rho_*, 
   \end{array}\right.\quad 1\le j\neq k \le3,\\
  & \mu_3(\{\by\in \R^3_+: y_1>x_1, y_2>x_2, y_3>x_3\})  = \frac{(h_1h_2h_3)^{-1}}{(2\pi)^{3/2}|\Sigma|^{1/2}}  x_1^{-\alpha h_1}x_2^{-\alpha h_2}x_3^{-\alpha h_3}.
\end{align*}
Hence, for large $t$ we have 
 \begin{align*}
    \P(\bX\in t B_{\bx,1})&=\P(\bX\in t\,[\bzero,\bx]^c) \sim t^{-\alpha}(x_1^{-\alpha} + x_2^{-\alpha} + x_3^{-\alpha}),\\
    \P(\bX\in t B_{\bx,2})&=\P(X_j> tx_j, X_k> tx_k \text{ for some $j\neq k$}) \\
    & \sim { ({2\pi})^{-\frac{\rho^*}{1+\rho_*}}} (2\alpha\log (t))^{-\frac{\rho_*}{1+\rho_*}}t^{-\frac{2\alpha}{1+\rho_*}}\frac{(1+\rho_*)^{3/2}}{(1-\rho_*)^{1/2}} \sum_{\genfrac{}{}{0pt}{}{1\le j< k \le 3}{\rho_{jk}=\rho_*}}(x_jx_k)^{-\frac{\alpha}{1+\rho_{*}}},\\
    \P(\bX\in t B_{\bx,3})&=\P(\bX\in t\,(\bx,\binfty)) \\
   & \sim ({2\pi})^{\frac{\gamma_3-3}{2}} (2\alpha\log (t))^{\frac{\gamma_3-3}{2}}t^{-\alpha\gamma_3}\frac{(h_1h_2h_3)^{-1}}{|\Sigma|^{1/2}}  x_1^{-\alpha h_1}x_2^{-\alpha h_2}x_3^{-\alpha h_3}.
\end{align*}
For a general matrix $\Sigma$, finding a rule for $|\Sigma|>0$ and $\bh=\Sigma^{-1}\bone>\bzero$ is cumbersome. Moreover, 
the case where not all $h_j>0$  requires some attention. Here, we focus on a couple of particular choices of $\rho_{jk}$ to elaborate on these issues.
\begin{enumerate}[(a)]
    \item Suppose $\Sigma=\Sigma_{\rho}$, i.e, $\rho_{jk}=\rho,$  $\forall\; j\neq k$. Then positive definiteness of $\Sigma$ requires \linebreak $|\Sigma|=(1-\rho)^2(1+2\rho)>0$ implying $-\frac12<\rho<1$. Moreover,  for any $\rho<1$ we have $\bh=\Sigma^{-1}\bone>\bzero$. Here $h_i=\frac{1}{1+2\rho}, i=1,2,3$. Using \Cref{thm:hrvgc}  we have \linebreak $\bX\in \MRV(\alpha_i, b_i, \mu_i, \E_3^{(i)})$, $i=1,2,3$,
where 
\begin{align*}
    \alpha_1 & =\alpha,\quad \quad \alpha_2 = \frac{2\alpha}{1+\rho}, \quad \quad \alpha_3  = \frac{3\alpha}{1+2\rho},
    \end{align*}   
and for $\bx=(x_1, x_2, x_3)>\bzero$,
    \begin{eqnarray*} 
       && \mu_1(\{\by\in \R^3_+:\; y_j>x_j \})  = x_j^{-\alpha}, \quad 1\leq j\leq 3,\\
   && \mu_2(\{\by\in \R^3_+:\; y_j>x_j, y_k>x_k\})  = \frac{(1+\rho)^{3/2}}{2\pi(1-\rho)^{1/2}} (x_jx_k)^{-\frac{\alpha}{1+\rho}}, \quad \ 1\le j\neq k \le 3,\\
  && \mu_3(\{\by\in \R^3_+:\; y_j>x_j \text{ for all $j\in \mathbb{I}$}\})  =  \frac{{(1+2\rho)^{5/2}}}{(2\pi)^{3/2}(1-\rho)}  (x_1x_2x_3)^{-\frac{\alpha}{1+2\rho}}.
\end{eqnarray*}
 \item Suppose $\rho_{12}=\rho, \rho_{13}=\rho_{23}=\sqrt{2}\rho$ with $\sqrt{2}|\rho|<1$. Positive definiteness of $\Sigma$ requires $$|\Sigma|= 1-5\rho^2+4\rho^3=(1-\rho)(1+\rho-4\rho^2)=4(1-\rho)\left(\rho-\tfrac{1-\sqrt{17}}8\right)\left(\tfrac{1+\sqrt{17}}8-\rho\right)>0$$ implying $\rho\in \left(\tfrac{1-\sqrt{17}}8, \tfrac{1+\sqrt{17}}8\right)\approx (-0.39, 0.64)$. Assume this holds.  
 \begin{enumerate}[(i)]
     \item Additionally let $\rho<\tfrac{1}{2\sqrt{2}-1} \approx 0.55$, then $\Sigma^{-1}\bone>\bzero$ and 
     \beam \label{eq1}
     h_1=h_2=\frac{1-\sqrt{2}\rho}{1+\rho-4\rho^2} 
     \quad \text{ and }\quad h_3=\frac{1-(2\sqrt{2}-1)\rho}{1+\rho-4\rho^2}.
     \eeam
      Hence, with $\rho\in (\tfrac{1-\sqrt{17}}8,\tfrac{1}{2\sqrt{2}-1})$ a conclusion from \Cref{thm:hrvgc} is that \linebreak $\bX\in \MRV(\alpha_i,b_i, \mu_i, \E_3^{(i)})$,  $i=1,2,3$,
where \begin{align*}
    \alpha_1 & =\alpha,\quad \quad \alpha_2 = \frac{2\alpha}{1+\sqrt{2}\rho}, \quad \quad \alpha_3  = \alpha\frac{3-(4\sqrt{2}-1)\rho}{1+\rho-4\rho^2},
    \end{align*}
and $\mu_1$ is the same as in part (a). For $\bx=(x_1, x_2, x_3)>\bzero$ and  $1\le j<k\le 3$, we obtain
    \begin{align*} 
    \mu_2(\{\by\in \R^3_+:  \;\; y_j>x_j,\, y_k>x_k \})  = \begin{cases} \frac{(1+\sqrt{2}\rho)^{3/2}}{2\pi(1-\sqrt{2}\rho)^{1/2}}(x_1x_3)^{-\frac{\alpha}{1+\sqrt{2}\rho}}, &  (j,k) = (1,3),\\ \frac{(1+\sqrt{2}\rho)^{3/2}}{2\pi(1-\sqrt{2}\rho)^{1/2}}(x_2x_3)^{-\frac{\alpha}{1+\sqrt{2}\rho}}, & (j,k) = (2,3),\\
    0, & \text{otherwise},\end{cases}
      \end{align*}
  and,    
         \begin{align*} 
   \mu_3(\{\by\in \R^3_+:  \;\;& y_1>x_1, y_2>x_2, y_3>x_3 \})   \\ & \quad=  \frac{(1+\rho-4\rho^2)^{5/2}}{(2\pi)^{3/2}(1-\sqrt{2}\rho)^2(1-(2\sqrt{2}-1)\rho)\sqrt{1-\rho}}  x_1^{-\alpha h_1}x_2^{-\alpha h_2}x_3^{-\alpha h_3}.
\end{align*}
  Since $\mu_2(\{\by\in \R^3_+:  \;\; y_1>x_1,\, y_2>x_2 \})=0$, we {are not able to} approximate the tail probability $\P(X_1>tx_1, X_2>tx_2)$  from this, but we can use \Cref{prop:premain} to get as $t\to\infty$,
\[\P(X_1>tx_1, X_2>tx_2) \sim {(2\pi)^{-\frac{\rho}{1+\rho}}} (2\alpha\log (t))^{-\frac{\rho}{1+\rho}}t^{-\frac{2\alpha}{1+\rho}}\frac{(1+\rho)^{3/2}}{(1-\rho)^{1/2}} (x_1x_2)^{-\frac{\alpha}{1+\rho}}.  \]
\item Let $\rho \ge \tfrac{1}{2\sqrt{2}-1}$. Then $(\Sigma^{-1}\bone)_3\le 0$, see \eqref{eq1}.  Hence, with { $\rho\in \left[\tfrac{1}{2\sqrt{2}-1},  \tfrac{1+\sqrt{17}}8\right)$}, from \Cref{thm:hrvgc}, we have $\bX\in \MRV(\alpha_i,{b_i}, \mu_i, \E_3^{(i)}), i=1,2,3$,
where \begin{align*}
    \alpha_1 & =\alpha,\quad \quad \alpha_2 = \frac{2\alpha}{1+\sqrt{2}\rho}, \quad \quad \alpha_3 = \frac{2\alpha}{1+\rho}, 
    \end{align*}  
and  $\mu_1, \mu_2$ are the same as in part (b)(i). However, in part (b)(i), and all previous examples with $\bX\in \R_+^3$, we had  $I_3=\mathbb{I}=\{1,2,3\}$ and hence, $\gamma_3=\bone^\top\Sigma^{-1}\bone$,
whereas, in this example $I_3=\{1,2\}$ resulting in $\gamma_3=\bone_{\{1,2\}}^{\top}\Sigma_{\{1,2\}}^{-1}\bone_{\{1,2\}}$ and hence, a different behavior on $\E_3^{(3)}$ ensues. Moreover,  
for $\bx=(x_1,x_2,x_3)>\bzero$, we have
    \begin{align*} 
   \mu_3(\{\by\in \R^3_+: \, y_j>x_j \text{ for all $j\in\mathbb{I}$}\})  = \frac{(1+\rho)^{3/2}}{4\pi(1-\rho)^{1/2}} (x_1x_2)^{-\frac{\alpha}{1+\rho}}.
  \end{align*}
   Probabilities $\P(\bX\in tB_{\bx,i})$  are also a little bit different from the previous examples hence, we detail them below. For $\bx=(x_1, x_2, x_3)>\bzero$ and large $t$ we have 
 \begin{align*}
   \P(\bX\in t B_{\bx,1})&= \P(\bX\in t\,[\bzero,\bx]^c) \sim t^{-\alpha}(x_1^{-\alpha} + x_2^{-\alpha} + x_3^{-\alpha}),\\
   \P(\bX\in t B_{\bx,2}) &\sim {({2\pi})^{-\frac{\sqrt{2}\rho}{1+\sqrt{2}\rho}}} (2\alpha\log (t))^{-\frac{\sqrt{2}\rho}{1+\sqrt{2}\rho}}t^{-\frac{2\alpha}{1+\sqrt{2}\rho}} \\
   &\quad \quad \times\frac{(1+\sqrt{2}\rho)^{3/2}}{(1-\sqrt{2}\rho)^{1/2}}\left[(x_1x_3)^{-\frac{\alpha}{1+\sqrt{2}\rho}}+(x_2x_3)^{-\frac{\alpha}{1+\sqrt{2}\rho}}\right],\\
  \P(\bX\in t B_{\bx,3})&=\P(\bX\in t\,(\bx,\binfty)) \\ 
  &\sim {({2\pi})^{-\frac{\rho}{1+\rho}}} (2\alpha\log (t))^{-\frac{\rho}{1+\rho}}t^{-\frac{2\alpha}{1+\rho}} \frac{(1+\rho)^{3/2}}{2(1-\rho)^{1/2}} (x_1x_2)^{-\frac{\alpha}{1+\rho}}.
\end{align*}
Using \Cref{prop:premain}, we can give a non-trivial estimate for $\P(X_1>tx_1, X_2>tx_2)$:
 \begin{align*}
       \P(X_1>tx_1, X_2>tx_2) & \sim {({2\pi})^{-\frac{\rho}{1+\rho}}} (2\alpha\log (t))^{-\frac{\rho}{1+\rho}}t^{-\frac{2\alpha}{1+\rho}} \frac{(1+\rho)^{3/2}}{(1-\rho)^{1/2}} (x_1x_2)^{-\frac{\alpha}{1+\rho}}\\
   & \sim  2 \P(\bX\in t\,(\bx,\binfty)).
\end{align*}
The final equivalence is evident from our computations; here we provide an interpretation for this behavior.  Suppose $\bX=(F_\alpha^{\leftarrow}(\Phi(Z_1)),F_\alpha^{\leftarrow}(\Phi(Z_2)),F_\alpha^{\leftarrow}(\Phi(Z_3)))$ where $\bZ\sim \Phi_{\Sigma}$ and $F_\alpha$ denotes the Pareto$(\alpha)$-distribution. Then
\beao
    \P\left(\bX\in t\,(\bx,\binfty)\right)&=&\P(X_3>tx_3|X_1>tx_1,X_2>tx_2)\P(X_1>tx_1,X_2>tx_2)\\
        &=& \P(F_\alpha^{\leftarrow}(\Phi(Z_3))>tx_3|X_1>tx_1,X_2>tx_2) \P(X_1>tx_1,X_2>tx_2)\\
        &\sim &\P(Z_3>0)\P(X_1>tx_1,X_2>tx_2) \quad\quad (t\to\infty)\\
        &=&\frac{1}{2}\P(X_1>tx_1,X_2>tx_2).
\eeao
When $X_1$ and $X_2$ are large, the normal random variable $Z_3=\Phi^{-1}(F_{\alpha}(X_3))$ is large as well if $Z_3$ happens to be positive due to the influence of the high correlation; note that $\rho\in (0.55, 0.64)$ and hence, $\sqrt{2}\rho\in(0.78, 0.91)$. By  symmetry of the normal distributions this occurs with probability $1/2$.

Thus, interestingly, the probability rate of decay for tail sets in $\E_3^{(3)}$ turns out to be the same as that of particular type of tail sets in $\E_3^{(2)}$ which have negligible probability under regular variation in $\E_3^{(2)}$. 
  \end{enumerate}
\end{enumerate}
\eexam
\bexam\label{ex:generaldequi}
Suppose $\bX=(X_1,\ldots, X_d)\in \R^d, d \ge 2,$ and $\bX\in \RVGC(\alpha, b, \Sigma_{\rho})$ where $\Sigma_\rho$ is the $d$-dimensional equi-correlation matrix with $-\tfrac{1}{d-1}<\rho <1$ which makes $\Sigma_\rho$ positive definite. 
Clearly,   $\Sigma_\rho^{-1}\bone>\bzero$.
Also assume that $\log(\ell(t))=o(\sqrt{\log(t)})$ as $t\to\infty$ where $\ov{F}_1(t)=(t^{\alpha}\ell(t))^{-1}$. Using \Cref{thm:hrvgc} we have $\bX\in \MRV(\alpha_i, b_i, \mu_i, \E_d^{(i)}), i=1,\ldots, d$,
where \begin{align*}
    \alpha_i = \frac{i \alpha}{1+(i-1)\rho}, \quad i=1,\ldots, d.
    \end{align*}
For any non-empty $S\subseteq \mathbb{I}$ with $|S|=i$, we have $|\Sigma_S|= (1-\rho)^{i-1}(1+(i-1)\rho)$.  Now, for
$\bx=(x_1,\ldots,x_d)>\bzero$ we obtain
    \begin{align*} 
       & \mu_1(\{\by\in \R^d_+: y_j>x_j \text{ for some $j$}\})= \mu_1(B_{\bx,1})  = \sum_{j=1}^d x_j^{-\alpha} ,\intertext{and for $2\le i \le d$, with $S\subseteq \mathbb{I}, |S|=i$, we have}
     \mu_i(A_{\bx_S})&=\mu_i(\{\by\in \R^d_+: y_j>x_j, \, \forall\, j \in S\})  = \frac{(1+(i-1)\rho)^{i-1/2}}{(2\pi)^{i/2}(1-\rho)^{(i-1)/2}} \prod_{j\in S} x_j^{-\frac{\alpha}{1+(i-1)\rho}}, \quad \text{ and, }\\
      \mu_i(B_{\bx,i})  &= \frac{(1+(i-1)\rho)^{i-1/2}}{(2\pi)^{i/2}(1-\rho)^{(i-1)/2}} \sum_{\genfrac{}{}{0pt}{}{S\subseteq \mathbb{I}}{|S|=i}} \prod_{j\in S} x_j^{-\frac{\alpha}{1+(i-1)\rho}}.
\end{align*}
Now, probabilities $\P(\bX\in tA_{\bx_S})$and $\P(\bX\in t B_{\bx,i})$  can again be approximated as in the previous examples.
\eexam
 Although \Cref{thm:hrvgc} gives a general formula, the related constants are not easy to compute in a closed form  which provide the precise tail rates, especially as the dimension $d$ increases.  Of course, if we know for all $S\subseteq \mathbb{I}$ the numerical value {of the quadratic programming problem} $\mathcal{P}_{\Sigma^{-1}_S}$ these rates can be computed; however, this might be up to $2^d$ quadratic programming problems.

\subsection{Comparing heavy tails to light tails for Gaussian copula} \label{sec:htvslt}
Gaussian copula have been popular in capturing dependence for a whole range of applications where the marginal distributions are modeled based on expert knowledge or empirical evidence and may range from light-tailed like normal or exponential to more heavy-tailed like Pareto or log-normal distribution. Interestingly, though the asymptotic joint tail behavior may vary significantly depending on whether marginals are heavy-tailed or light-tailed. In the following result, we study two bivariate distributions with the same Gaussian copula and elaborate how the joint probabilities differ in each case.

\begin{proposition}\label{prop:NormvsHt2D}
Let $\bX=(X_1, X_2)\in \RVGC(\alpha,\Sigma_{\rho})$ with identical Pareto($\alpha$) marginals $F_\alpha$, 
$\alpha>0$, and let $\bZ=(Z_1,Z_2)\sim \Phi_{\Sigma_\rho}$
with  $\rho\in (-1,1)$.  
Then the following holds for $x_1, x_2>0$ and $x_{max}=\max(x_1,x_2)$.
\begin{enumerate}
    \item[(a)] If $\rho< \min(x_1/x_2, x_2/x_1)$, then $\P(Z_1>tx_1, Z_2 >tx_2) \sim o(\P(Z_1>t x_{max}))$ as $t\to \infty$ and and additionally if $\rho>0$ we also have
    \begin{align}\label{eq:gaussnormal1}
    \P\left(Z_1> t x_1, Z_2>t x_2\right) \sim \frac{(1-\rho^2)^{3/2}}{(x_1-\rho x_1)(x_2-\rho x_1)}\frac{\varphi(tx_\rho )}{\sqrt{2\pi}t^2},\qquad t\to\infty,
    \end{align}
    where $x_\rho^2:=(x_1^2-2\rho x_1x_2+x_2^2)/(1-\rho^2)$.\\
     If $\rho \ge \min(x_1/x_2, x_2/x_1)$, then $\P(Z_1>tx_1, Z_2 >tx_2) \sim O(\P(Z_1>tx_{max}))$ as $t\to \infty$ and
    \begin{align}\label{eq:gaussnormal2}
    \P\left(Z_1> t x_1, Z_2>t x_2\right) \sim C_{\rho}\frac{\varphi(tx_{max})}{tx_{max}}, \qquad t\to\infty,
    \end{align}
    where $C_{\rho}=\left\{\begin{array}{ll}
        1 & \text{ if } \rho>\min(x_1/x_2,x_2/x_1),\\ 
        \frac{1}{2} & \text{ if  } \rho= \min(x_1/x_2,x_2/x_1).
    \end{array}\right.$
    \item[(b)] For any $\rho\in (-1,1)$ have $\P(X_1>tx_1, X_2 >tx_2) \sim o(\P(X_1>tx_{max}))$ as $t\to \infty$ and
    \begin{align}\label{eq:gausspareto}
     \P(X_1>tx_1, X_2 >tx_2) \sim t^{-\frac{2\alpha}{1+\rho}}(2\alpha\log(t))^{-\frac{\rho}{1+\rho}} {(2\pi)^{-\frac{\rho}{1+\rho}} }
     \frac{(1+\rho)^{3/2}}{(1-\rho)^{1/2}} (x_1x_2)^{-\frac{\alpha}{1+\rho}}, t\to\infty.
    \end{align}
       
\end{enumerate}
\end{proposition}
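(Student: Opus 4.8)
\textbf{Part (b)} is an immediate corollary of \Cref{prop:premain}. For Pareto$(\alpha)$ margins $\overline F_1(t)=t^{-\alpha}$, so $\ell\equiv1$, the hypothesis $\log\ell(t)=o(\sqrt{\log t})$ holds trivially and $b^{\leftarrow}(t)=t^{\alpha}$. Taking $d=2$, $S=\{1,2\}$ and $\Sigma=\Sigma_\rho$, one computes $\Sigma_\rho^{-1}\bone=(1+\rho)^{-1}\bone>\bzero$, so by \Cref{lem:qp}(a) we have $I_S=\{1,2\}$, $\be_S^*=\bone$, $\gamma_S=\bone^\top\Sigma_\rho^{-1}\bone=2/(1+\rho)$, $h_1^S=h_2^S=1/(1+\rho)$ and $\Upsilon_S=(1+\rho)^{3/2}/\big(2\pi\sqrt{1-\rho}\,\big)$. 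Substituting these into \eqref{eq:mainlimitmrvA} and simplifying the constants (using $\Upsilon_S(2\pi)^{\gamma_S/2}=\tfrac{(1+\rho)^{3/2}}{\sqrt{1-\rho}}(2\pi)^{-\rho/(1+\rho)}$) yields \eqref{eq:gausspareto}. Since $\rho<1$ forces $2\alpha/(1+\rho)>\alpha$, the right-hand side of \eqref{eq:gausspareto} is $o(t^{-\alpha})=o(\P(X_1>tx_{max}))$, which is the claimed asymptotic tail independence.

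\textbf{Part (a)} requires a Gaussian-specific argument, since for $x_1\neq x_2$ the threshold $(tx_1,tx_2)$ is not of the form handled by \Cref{prop:gausstail}. Assume without loss of generality $x_1\le x_2$, so $x_{max}=x_2$ and $\min(x_1/x_2,x_2/x_1)=x_1/x_2$. Write $Q(z_1,z_2)=\tfrac{z_1^2-2\rho z_1z_2+z_2^2}{2(1-\rho^2)}$, so that $\P(Z_1>tx_1,Z_2>tx_2)=\tfrac{1}{2\pi\sqrt{1-\rho^2}}\iint_{\{z_1>tx_1,\,z_2>tx_2\}}e^{-Q}\,dz_1\,dz_2$, with $\nabla Q(tx_1,tx_2)=\tfrac{t}{1-\rho^2}(x_1-\rho x_2,\;x_2-\rho x_1)$. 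If $\rho<x_1/x_2$ (Case~1), both coordinates of $\nabla Q(tx_1,tx_2)$ are positive, so $Q$ attains its minimum on $\{z_1\ge tx_1,z_2\ge tx_2\}$ at the corner $(tx_1,tx_2)$, with value $\tfrac12 t^2x_\rho^2$. A corner-point Laplace expansion --- the quadratic remainder of $Q$ is $o(1)$ on a box of side $O(1/t)$ around the corner while the mass outside such a box is exponentially smaller, and $\iint_{[0,\infty)^2}e^{-\partial_1Q\,s-\partial_2Q\,u}\,ds\,du=(\partial_1Q\,\partial_2Q)^{-1}$ --- then gives
\[
\P(Z_1>tx_1,Z_2>tx_2)\sim\frac{(1-\rho^2)^{3/2}}{2\pi t^2(x_1-\rho x_2)(x_2-\rho x_1)}\,e^{-t^2x_\rho^2/2}=\frac{(1-\rho^2)^{3/2}}{\sqrt{2\pi}\,t^2(x_1-\rho x_2)(x_2-\rho x_1)}\,\varphi(tx_\rho),
\]
which gives \eqref{eq:gaussnormal1} (valid for all $\rho<x_1/x_2$, in particular for $\rho>0$). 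Since $x_\rho^2-x_2^2=(x_1-\rho x_2)^2/(1-\rho^2)>0$, this is $o\big(\varphi(tx_2)/(tx_2)\big)=o(\P(Z_1>tx_{max}))$.

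If $\rho\ge x_1/x_2$ (Case~2), I would condition on $Z_2$: using $Z_1\mid Z_2=y\sim\mathcal N(\rho y,1-\rho^2)$,
\[
\P(Z_1>tx_1,Z_2>tx_2)=\int_{tx_2}^{\infty}\overline{\Phi}\!\left(\frac{tx_1-\rho y}{\sqrt{1-\rho^2}}\right)\varphi(y)\,dy .
\]
When $\rho>x_1/x_2$, the argument of $\overline{\Phi}$ is $\le t(x_1-\rho x_2)/\sqrt{1-\rho^2}\to-\infty$, so $\overline{\Phi}(\cdot)\to1$ uniformly on the range of integration and the integral is $(1+o(1))\overline{\Phi}(tx_2)\sim\varphi(tx_{max})/(tx_{max})$, i.e.\ \eqref{eq:gaussnormal2} with $C_\rho=1$. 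When $\rho=x_1/x_2$, the substitutions $y=tx_2+v$ then $w=tx_2\,v$, together with $\varphi(tx_2+v)=\varphi(tx_2)e^{-tx_2 v-v^2/2}$, turn the integral into $\tfrac{\varphi(tx_2)}{tx_2}\int_0^\infty\overline{\Phi}\!\big(\tfrac{-\rho w}{tx_2\sqrt{1-\rho^2}}\big)e^{-w}e^{-w^2/(2t^2x_2^2)}\,dw$; the integrand is dominated by $e^{-w}$ and converges pointwise to $\tfrac12 e^{-w}$, so dominated convergence gives the limit $\tfrac12$, i.e.\ \eqref{eq:gaussnormal2} with $C_\rho=\tfrac12$. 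In both subcases $\P(Z_1>tx_1,Z_2>tx_2)$ has the exact order $\overline{\Phi}(tx_{max})$, hence is $O(\P(Z_1>tx_{max}))$ but not $o$ of it.

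The main obstacle is the rigorous bookkeeping in Case~1: controlling the quadratic remainder of $Q$ near the corner and the exponentially small contribution from outside a shrinking box, so as to pin down the precise constant in \eqref{eq:gaussnormal1} together with a matching two-sided estimate. Case~2 is comparatively soft, needing only uniform convergence of a Gaussian tail and one dominated-convergence argument, and part (b) is immediate from \Cref{prop:premain}.
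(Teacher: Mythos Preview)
Your proposal is correct and follows essentially the same route as the paper. For part (b) you invoke \Cref{prop:premain} with $d=2$, which is exactly what the paper does (via \Cref{thm:hrvgc} and \Cref{ex:generaldequi}), and your constant computations are right. For part (a) the paper simply cites \citet[Lemma~2.1]{elnaggar:mukherjea:1999} and \citet[Example~3]{hashorva:husler:2003}; your corner-point Laplace expansion in Case~1 and the conditioning argument in Case~2 are precisely the content of those references, so you have reproduced the cited proofs rather than taken a different route. Incidentally, your derivation in Case~1 yields the denominator $(x_1-\rho x_2)(x_2-\rho x_1)$, which is the correct expression; the $(x_1-\rho x_1)$ appearing in the paper's display \eqref{eq:gaussnormal1} is a typographical slip.
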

\bproof {\color{white}{nthing}}
\begin{enumerate}[(a)]
\item These results can be derived from \citet[Lemma 2.1]{elnaggar:mukherjea:1999} with a simple change of variable; also see \citet[Example 3]{hashorva:husler:2003}.
    \item This is a consequence of  \Cref{thm:hrvgc} when $d=2$, see \Cref{ex:generaldequi} with $d=2$ as well.    
\end{enumerate}
\eproof

\begin{remark} $\mbox{}$ \label{Remark:3.9}
\begin{itemize}
\item[(i)] Asymptotic independence in the Gaussian setup is equivalent to $\P(Z_1>t,Z_2>t)=o(\P(Z_2>t))$ as $t\to\infty$. Nevertheless, from \Cref{prop:NormvsHt2D}(a) we have that for any $x_1\in(0,\rho]$,
\beao
    \P(Z_1>x_1 t,Z_2>t)=O(\P(Z_2>t)), \quad t\to\infty.
\eeao
Hence, we need to be careful in interpreting asymptotic independence for this case, since joint tail crossings may still happen with the same rate as the marginal tail crossings.
\item[(ii)] Curiously,  in the Gaussian setting  the rate of decrease of the tail probabilities depend on the chosen $\bx$ resulting in, e.g.,
       \beao
            \lim_{t\to\infty}\frac{\P(Z_1>t,Z_2>t)}{\P(Z_1>x_1 t,Z_2>t)}=0, \quad \forall\;0<x_1\leq \rho.
       \eeao
In contrast, in the heavy-tailed setting (with Gaussian copula) the different tail probabilities are equivalent and 
       \beao
            \lim_{t\to\infty}\frac{\P(X_1>t,X_2>t)}{\P(X_1>x_1 t,X_2>x_2t)}={(x_1x_2)}^{\frac{\alpha}{1+\rho}}>0, \qquad \forall\, x_1,x_2>0.
       \eeao
An explanation for the difference between the two phenomena is the following:
\begin{enumerate}[(a)]
       \item Gaussian model: The assumption $\rho<\min(x_1/x_2,x_2/x_1)$ is equivalent to 
       \beam \label{4.1}
       \Sigma^{-1}\bx=\frac{1}{1-\rho^2}\left(\genfrac{}{}{0pt}{}{x_1-\rho x_2}{x_2-\rho x_1}\right)>\bzero
       \eeam
       with $\bx=(x_1,x_2)\in\R_+^2$. If \eqref{4.1} is satisfied
       then the multivariate version of Mill's ratio (see \citet{savage:1962}) can be applied. If \eqref{4.1}  is not satisfied the multivariate version of Mill's ratio is not directly applicable. 
       \item Heavy-tailed model:  We need not differentiate between the cases $\rho<\min(x_1/x_2,x_2/x_1)$  and $\rho\ge\min(x_1/x_2,x_2/x_1)$ anymore. Indeed,
       \beao
            \P(X_1>tx_1,X_2>tx_2)=\P(Z_1>z_{x_1}(t),Z_2>z_{x_2}(t))
       \eeao
       with 
       $            z_x(t)  =  \overline{\Phi}^{-1}(\overline{F}_\alpha(tx))$
       and due to \Cref{lem:paretotogaussquantile}, 
       \beam
            \frac{1}{\sqrt{2\alpha \log (t)}}\Sigma^{-1}\left(\genfrac{}{}{0pt}{}{z_{x_1}(t)}{z_{x_2}(t)}\right)
            \stackrel{t\to\infty}{\longrightarrow}
            \Sigma^{-1}\bone, 
       \eeam
       allowing a direct application of the multivariate Mill's ratio if $\Sigma^{-1}\bone>\bzero$; in contrast to \eqref{4.1} in the Gaussian case. Of course, in dimension $d=2$ the condition       $\Sigma^{-1}\bone>\bzero$ is  equivalent to $\rho<1$ and is clearly independent of $\bx$. 
\end{enumerate}
\end{itemize}
\end{remark}
\begin{remark}\label{rem:htvsltinhigherdim}
    \Cref{prop:NormvsHt2D} elaborates on a dichotomy of behavior of Gaussian copula under different distribution tails in dimension $d=2$, where asymptotic (tail) independence has been classically defined.  A natural extension of asymptotic (tail) independence to a higher dimensional random vector $\bY\in \R^d$ is that the following holds as $t\to\infty$: $$\P(Y_j>t, \forall j\in \mathbb{I}) = o\left(\min_{\genfrac{}{}{0pt}{}{S \subseteq \mathbb{I}, }{0<|S|<d}} \P(Y_j>t, \forall j\in S)\right).$$
    For this definition of asymptotic independence, the result in \Cref{prop:NormvsHt2D} can be extended  to any dimension $d\geq 2$ as well by comparing \Cref{thm:hrvgc} with \citet[Theorem 4.1]{hashorva:husler:2003}; we have refrained from this at present, opting for brevity and clarity in explaining the idea in two dimensions.
    
\end{remark}

\section{Simulation study}\label{sec:simulation}
In this section, we use simulated data examples to understand the behavior of tail parameters in multivariate heavy-tailed distributions.

\subsection{Comparing tail indices in different dimensions}\label{subsec:ex1} 
Consider the random vector \linebreak $\bX=(X_1, X_2, X_3)$  having identical Pareto($\alpha$) marginal distributions $F_\alpha$ with $\alpha=2$ and dependence given by the Gaussian copula $C_{\Sigma}$ with correlation matrix 
\begin{align}\label{corr3:noneq}
    \Sigma=\begin{pmatrix}
1 & \rho & \sqrt{2}\rho\\
\rho & 1 & \sqrt{2}\rho \\
\sqrt{2}\rho  & \sqrt{2}\rho  & 1
\end{pmatrix};
\end{align}
 cf. \Cref{ex:trigauspar}(b), i.e., $\bX\in \RVGC(\alpha, b, \Sigma)$ with $b(t)=t^{\frac{1}{\alpha}}$. Consider computing probabilities for the events $tA$ when $t$ is large, for various $A\subset \E_3^{(i)}\subset \R^3_+, i=1,2, 3.$ For $1\le j \neq k \le 3$, consider the following sets.
\begin{itemize}
    \item In $\E_3^{(1)}$:
    \begin{eqnarray*}
    \begin{array}{rll}
         A_j & =\{\by\in\R_+^3:\,y_j>1\} &\quad \text{ with }  \quad  \P(\bX\in t A_j)=\P(X_j>t),\\
         A_{\max}  &=\{\by\in\R_+^3:\,\max_j y_j>1\} &\quad \text{ with } \quad \P(\bX\in tA_{\max})=\P(\max_jX_j>t).
    \end{array}
    \end{eqnarray*}
      \item In $\E_3^{(2)}$:
    \begin{align*}
    \begin{array}{rll}
         A_{jk}&=\{\by\in\R_+^3:\, y_j>1,y_k>1\} & \quad \text{ with } \quad  \P(\bX\in tA_{jk})=\P(X_j>t,X_k>t),\\
         A_{(2)}&=\{\by\in\R_+^3:\,y_{(2)}>1\} &\quad \text{ with } \quad \P(\bX\in tA_{(2)})=\P(X_{(2)}>t).
    \end{array}
    \end{align*}
      \item In $\E_3^{(1)}$: 
      \begin{align*}
      A_{\text{min}}=\{\by\in\R_+^3:\,y_j>1\text{ for all }j\} \hspace*{0.1cm}
   \text{  with }\quad  \P(\bX\in tA_{\text{min}})=\P(\min_j X_j>t).
    \end{align*}
\end{itemize}
Note that $y_{(2)}$ denotes the second largest value of $y_1,y_2,y_3$. From \Cref{ex:trigauspar}, we know that $\P(\bX\in tA)=t^{-\alpha_A}\ell_A(t)$  for some $\alpha_A>0, \ell_A\in\RV_0$ (here $\ell_A$ includes both the measure $\mu_*(A)$ of the set $A$ in terms of a limit measure $\mu_*$ and the slowly varying function in \Cref{thm:hrvgc}).  Here, we may estimate the tail parameter $\alpha_A$ using the Hill estimator \citep{hill:1975, resnickbook:2007} for data pertaining to the relevant quantities (see \citep{mitra:resnick:2011hrv,das:mitra:resnick:2013}); for example, the tail index $\alpha_{A_{\min}}$ of  $\P(\bX\in tA_{\min})=\P(\min_{j\in\in\mathbb{I}} X_j >t)=t^{-\alpha_{A_{\min}}}\ell_{A_{\min}}(t)$ can be estimated by the Hill estimator of data coming from iid random variables following the distribution of $\min_{j\in\mathbb{I}} X_j$.
We generate $n=20,000$ samples each from the distribution of $\bX$, using two  choices of $\rho$: $\rho=0.2$, and $\rho=0.6$; then we estimate the tail parameters $\alpha_A$ (decay rates) for the tail probabilities mentioned above. 

\begin{figure}[ht]
    \centering
 \includegraphics[width=\linewidth]{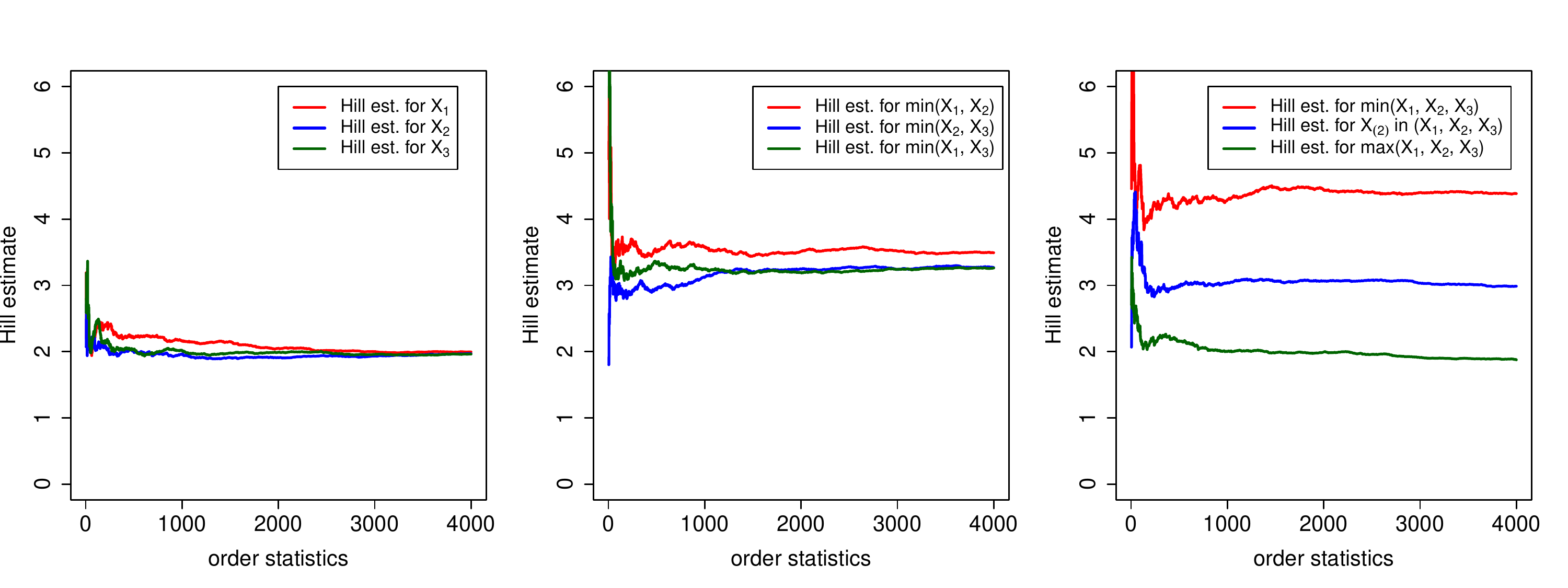} 
 \caption{Hill estimates when $X_1, X_2, X_3 \sim$ Pareto(2) having Gaussian copula dependence $C_\Sigma$ with $\Sigma$ as in \eqref{corr3:noneq} and $\rho=0.2$. (1) Left plot: Hill estimates of the tail indices of $X_1, X_2$ and $X_3$, respectively. (2) Center plot: Hill estimates of the tail indices of $\min(X_1, X_2), \min(X_2, X_3)$ and $\min(X_1,X_3)$, respectively. (3) Right plot:  Hill estimates of the tail index of $\min(X_1, X_2, X_3)$, $X_{(2)}$  and $\max(X_1, X_2, X_3)$.} 
  \label{fig:hrvcorr02}
\end{figure}

The different Hill estimates of tail indices when $\rho=0.2$ are plotted in \Cref{fig:hrvcorr02}. Following tradition, Hill estimates are plotted for an increasing number of order statistics. Stability in the plots supports that the probability tails are regularly varying (heavy-tailed) and the zone of stability approximates the tail index value. For the sets in $\E_3^{(1)}$, the Hill estimates of tail indices $\alpha_{A_j}$ ($j=1,2,3$) for $X_j$ ($j=1,2,3$) and $\alpha_{A_{\max}}$ for $\max_{j}X_j$ are plotted in the left plot and with the green line in the right plot, respectively. All the Hill plots indicate that the relevant tail indices are near $2$ supporting the assumption that the marginal distributions are Pareto with tail parameter $\alpha=2$ which is as well the index of regular variation of $\alpha_1=\alpha=2$ in $\E_3^{(1)}$; the same holds true when $\rho=0.6$ as well, see \Cref{fig:hrvcorr06}. The Hill estimates for the parameters $\alpha_{A_{12}},\alpha_{A_{13}},\alpha_{A_{23}}$ and $\alpha_{A_{(2)}}$ for tail sets in $\E_3^{(2)}$ are plotted in the center plot and the blue line on the right plot, respectively. The plots indicate that the tail indices are all close to 3 with the estimate for the index $\alpha_{A_{12}}$ being higher than those of  $\alpha_{A_{13}}$, $\alpha_{A_{23}}$ as well as $\alpha_{A_{(2)}}$. From \Cref{ex:trigauspar} (b)(i) with $\rho<0.55$, we know that the tail index  for the former is given by $\alpha_{A_{12}}=\frac{2\alpha}{1+\rho}=3.33$ and the latter three by $\alpha_2=\alpha_{A_{13}}=\alpha_{A_{23}}=\alpha_{A_{(2)}}=\frac{2\alpha}{1+\sqrt{2}\rho}=3.12$, supporting the simulation results. The Hill estimates for the tail index  $\alpha_{A_{\min}}$, where $A_{\min}$ is an $\E_3^{(3)}$ set, is given by the red line in the right plot; it is higher than any of the other estimates and is closer to 4. Its true value from \Cref{ex:trigauspar} (b)(i) is $\alpha_{A_{\min}}=\alpha_3=\frac{\alpha(3-(4\sqrt{2}-1)\rho)}{1+\rho-4\rho^2}=3.98$, which again supports the simulations.

\begin{figure}[ht]
    \centering
 \includegraphics[width=\linewidth]{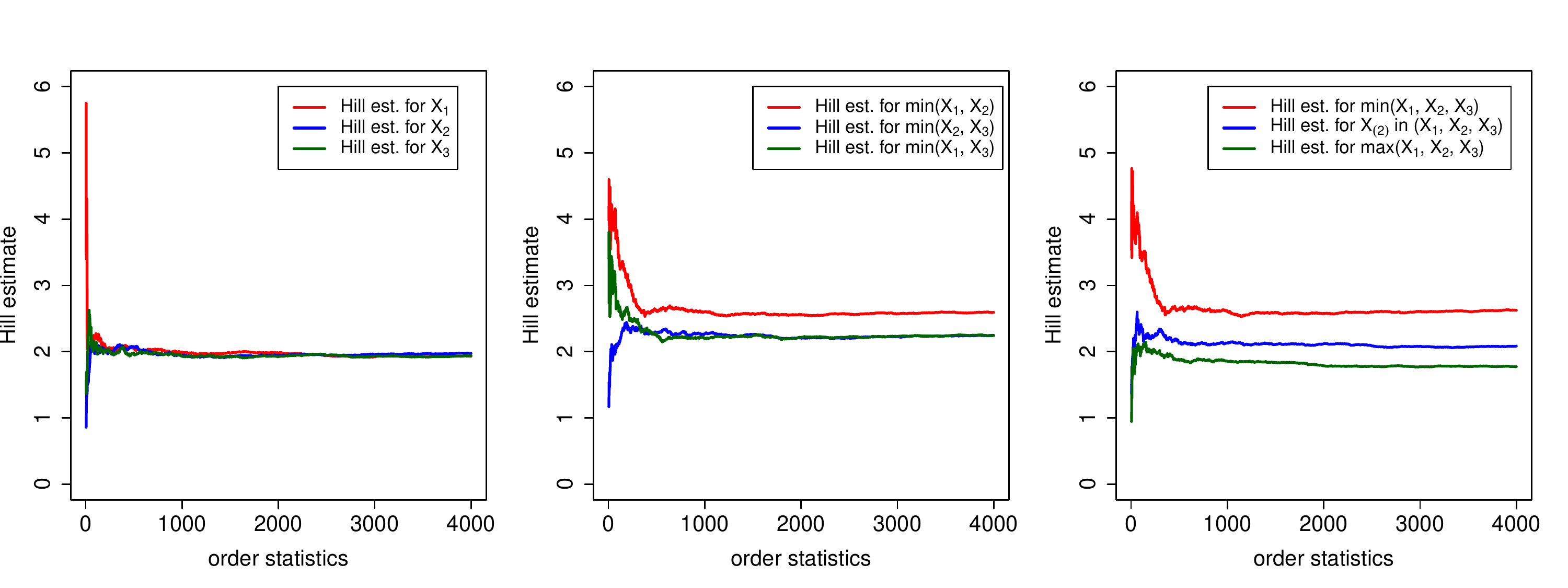}  
 \caption{Hill estimates when $X_1, X_2, X_3 \sim$ Pareto(2) having Gaussian copula dependence $C_\Sigma$ with $\Sigma$ as in \eqref{corr3:noneq} and $\rho=0.6$. (1) Left plot: Hill estimates of the tail indices of $X_1, X_2$ and $X_3$, respectively. (2) Center plot: Hill estimates of the tail indices of $\min(X_1, X_2), \min(X_2, X_3)$ and $\min(X_1,X_3)$, respectively. (3) Right plot:  Hill estimates of the tail index of $\min(X_1, X_2, X_3)$, $X_{(2)}$  and $\max(X_1, X_2, X_3)$.} 
  \label{fig:hrvcorr06}
\end{figure}

In contrast, we plot  in \Cref{fig:hrvcorr06}, the relevant Hill estimates under the same model but now with $\rho=0.6$. This model follows \Cref{ex:trigauspar}(b)(ii) where $\rho\ge 0.55$. The pattern for the behavior of $\E_3^{(1)}$ and $\E_3^{(2)}$ tail sets remain the same except that   $\alpha_{A_{12}}=\frac{2\alpha}{1+\rho}=2.5$ and that  $\alpha_2=\alpha_{A_{(2)}}=\frac{2\alpha}{1+\sqrt{2}\rho}=2.16$, with $\alpha_{A_{12}}>\alpha_2$ which is also supported by the simulation results. Interestingly, the plot for the estimates of the tail index $\alpha_{A_{\min}}$ for the $\E_3^{(3)}$ tail set $A_{\min}$  given by the red line on the right plot is less than 3 and in fact, closer to the Hill plot of $\min(X_1,X_2)$ for $\alpha_{A_{12}}$. This is justified by the fact that here we have 
$\alpha_3=\frac{2\alpha}{1+\rho}=2.5 =\alpha_{A_{12}}$.  It is interesting to note that the tail index for regular variation 
on $\E_3^{(3)}$  is the same as that of the particular set $A_{12}$ in $\E_3^{(2)}$. Yet  $\alpha_{A_{12}}=2.5$ is greater than the regular variation index $\alpha_2=2.16$ on $\E_3^{(2)}$.

In summary, the simulation results exhibit the contrasting behavior of the tail indices for probabilities of tail sets under a Gaussian copula with different parameter values but the same Pareto marginal distribution confirming the theoretical results of \Cref{sec:hrvGaussian}.


\subsection{Comparing normal tails to Pareto tails}\label{subsec:ex2}
Let $\bZ=(Z_1,Z_2)$ have identical standard normal marginal distributions and $\bX=(X_1, X_2)$ have identical Pareto($\alpha$) marginal distributions with $\alpha=2$; moreover, let both $\bX$ and $\bZ$ have the same copula dependence given by the Gaussian copula $C_{\rho}$ with correlation $\rho=2/3.$
\begin{figure}[t]
    \centering
 \includegraphics[width=0.4\linewidth]{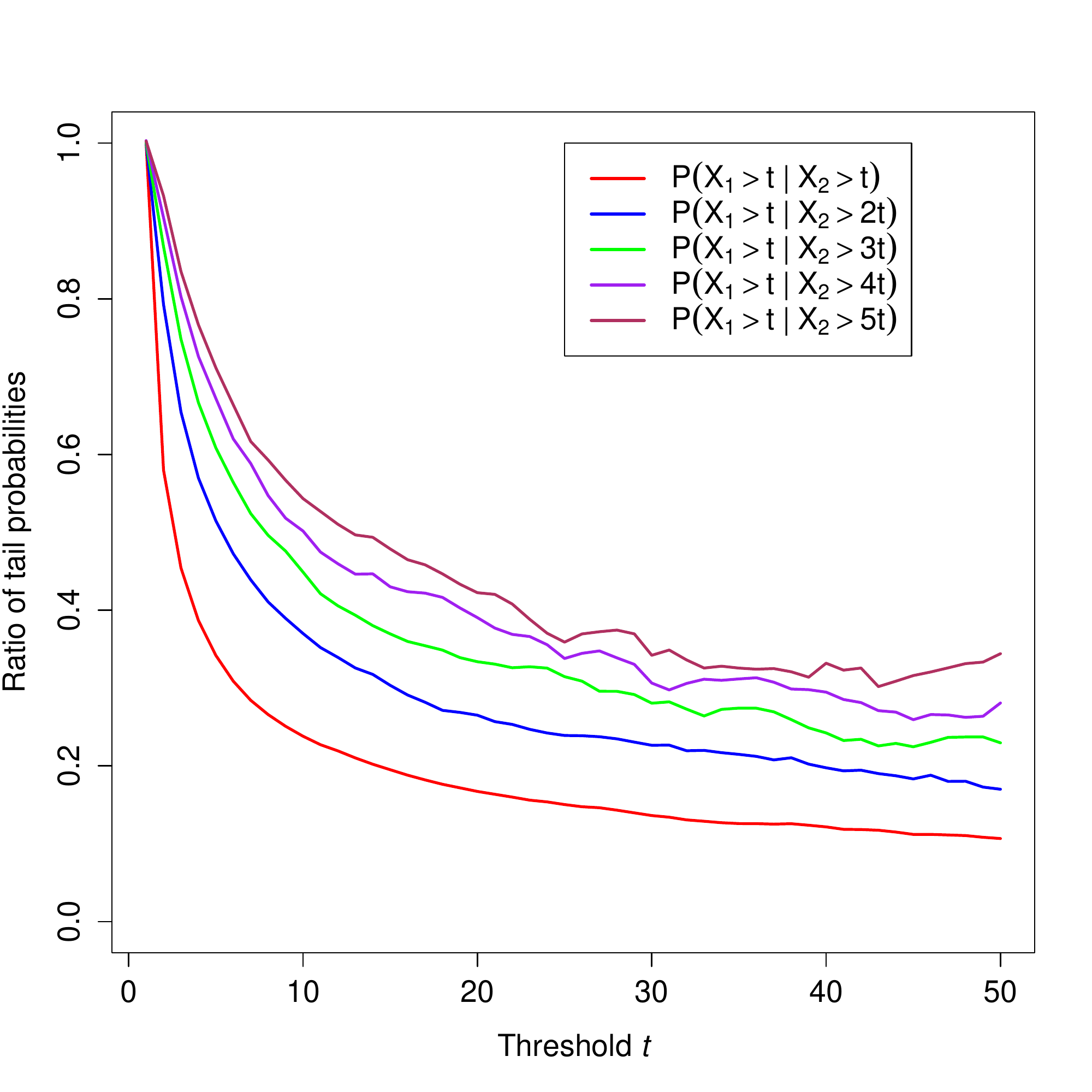}  \quad\quad\quad
 \includegraphics[width=0.4\linewidth]{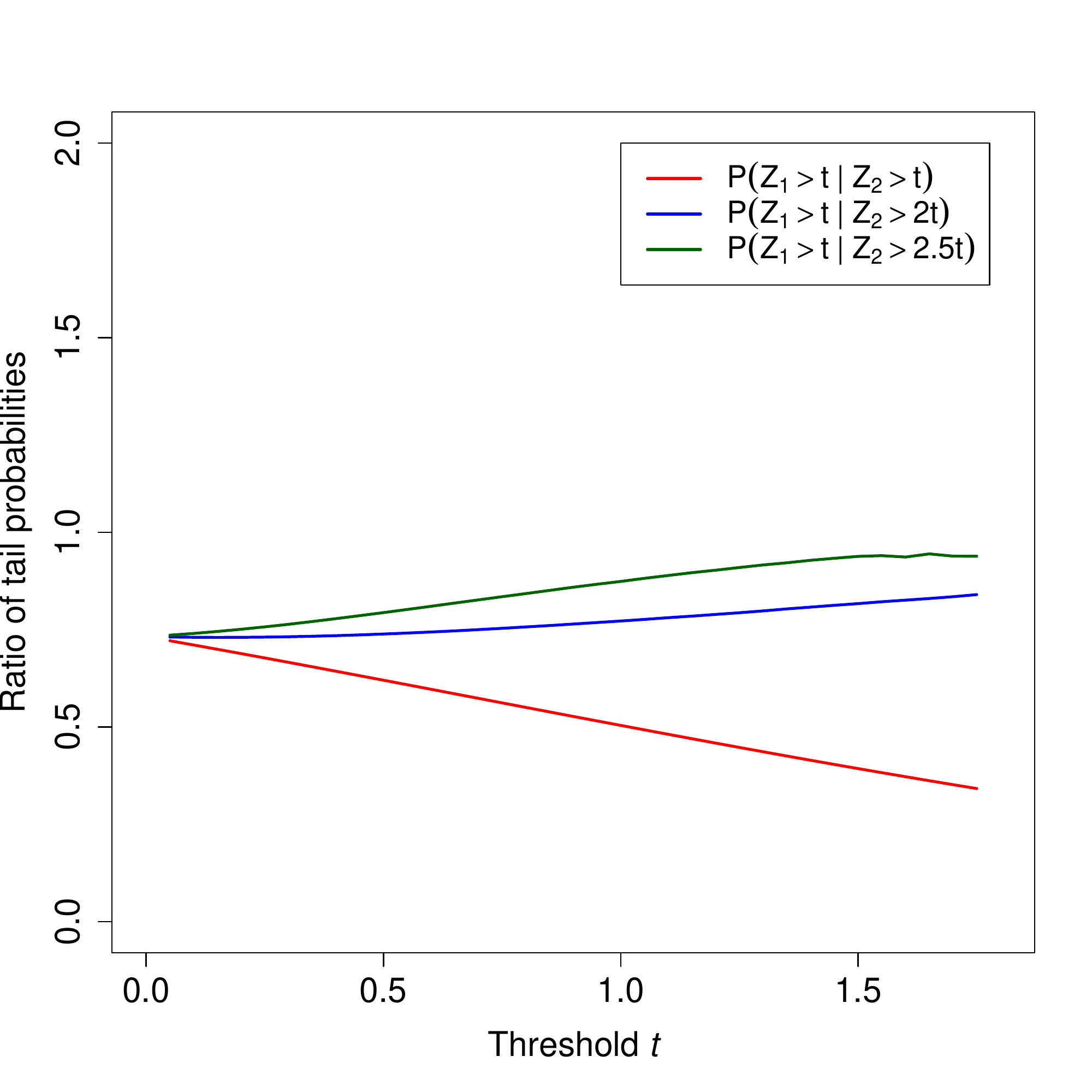}
 \caption{Tail probability comparisons under bivariate Gaussian copula $C_\rho$ with $\rho=2/3$. (1) Left plot: plot of $\P(X_1>t|X_2>\kappa t)$ for $ \kappa=1,\ldots,5$, $t\in (0,50)$ when $X_1, X_2 \sim \text{Pareto}(2)$. (2) Right plot: plot of $\P(Z_1>t|Z_2>\kappa t)$ for $\kappa=1,2,2.5$, $t\in (0,2)$ when $Z_1, Z_2 \sim \mathcal{N}(0,1)$.} 
  \label{fig:asyind}
\end{figure}

 We generate $n=1,000,000$ samples from each distribution $\bZ$ and $\bX$ respectively, and then empirically approximate $\P(X_1>t|  X_2>\kappa t)$ with $\kappa=1,\ldots,5$ and $\P(Z_1>t|  Z_2>\kappa t)$ with $\kappa=1,2,2.5$ for a range of values of $t$. We do this $m=1,000$ times and plot the average tail probabilities in \Cref{fig:asyind}. For the heavy-tailed vector $\bX$, we observe that \linebreak $\P(X_1>t|X_2>\kappa t)$ tends to $0$ for all value of $\kappa$ as $t\to\infty$. Hence, here we may expect  $$\P(X_1> t,X_2>\kappa t) = o\left(\P(X_2>\kappa t)\right), \quad t\to\infty,$$ for any $\kappa\ge 1$, indicating that the phenomenon of asymptotic independence \eqref{eq:asyind} holds more broadly. This is corroborated by \Cref{prop:NormvsHt2D}(b) as well.
 
 On the other hand, for the Gaussian vector,  the probability $\P(Z_1>t|Z_2> t)$ tends to $0$ (as $t\to\infty$) as expected by asymptotic independence, yet $\P(Z_1>t|Z_2> 2 t)$ and \linebreak $\P(Z_1>t|Z_2> 2.5 t)$ seem to stabilize  to some non-zero constant value.  Thus, in contrast to the heavy-tailed case, for Gaussian vectors, under certain circumstances we may expect  $$\P(Z_1>t,Z_2>\kappa t) \sim C \P(Z_2>\kappa t), \quad t\to\infty,$$
for some constant $C>0$. The reason is explained in \Cref{prop:NormvsHt2D}(a) and \Cref{Remark:3.9}: for $\rho=2/3$, with $(x_1,x_2)=(1,1)$, we have $\rho<\min(x_1/x_2,x_2/x_1)$, hence, \eqref{eq:gaussnormal1} holds justifying the red line decreasing to zero, and with  $(x_1,x_2)=(1,2) $ and $ (1,3)$, respectively, we have $\rho > \min(x_1/x_2,x_2/x_1)$, hence \eqref{eq:gaussnormal2} holds justifying the stable behavior of the blue and green lines in the right plot of \Cref{fig:asyind}.

\section{Conclusion}\label{sec:concl}
In this paper, we provide precise asymptotic probabilities for different tail sets of regularly varying distributions with Gaussian copula. 
Our key findings can be summarised as follows.
\begin{enumerate}[(i)]
    \item Multivariate distributions with tail equivalent regularly varying marginals (with minor regularity conditions) and a Gaussian copula dependence admit multivariate regular variation on cones $\E_d^{(i)}\subseteq \R^d_+$ for all $i\in \mathbb{I}$ (cf. \Cref{thm:hrvgc}).
    \item While computing probabilities of rectangular tail sets, which are quite important in practical applications, knowing the regular variation behavior on the relevant space $\E_d^{(i)}$ may still provide a negligible estimate for the tail probabilities; see \Cref{ex:trigauspar}(b). Hence, \Cref{prop:premain}, which provides the tail decay rates for such sets turns out to be quite useful; additionally, \Cref{corr:mainBx} gives tail asymptotics for another relevant tail set.
    For computing tail probabilities of general non-rectangular sets in $\E_d^{(i)}$, we still need to use \Cref{thm:hrvgc}.
    \item The joint tail behavior for light-vs-heavy-tailed distributions under Gaussian copula are structurally quite different. Classical asymptotic independence given by \linebreak $\P(X_1>t| X_2>t) \to 0 $ (as $t\to\infty$) completely characterizes the joint tail behavior when $(X_1,X_2)$ are marginally heavy-tailed. In contrast, if $(Z_1,Z_2)$ has Gaussian marginals,  then $\P(Z_1>t| Z_2>t) \to 0 $ (as $t\to \infty$) only gives a part of the story, as we find in \Cref{sec:htvslt}.
\end{enumerate}
These results are not only useful for risk managers interested in computing tail probabilities for various extreme tail events using a widely popular model, they also enable the computation of certain multivariate risk measures of interest, see \cite{prekopa:2012, cousin:bernardino:2013, cousin:bernardino:2014}.
A few interesting topics in this context still remain to be explored. In all our examples where $\bX\in \RVGC(\alpha,\Sigma)$ with a positive-definite $\Sigma$ and (as a consequence) $\bX\in \MRV(\alpha_i, \E_d^{(i)})$ for all $i\in \mathbb{I}$, we observed that $\alpha_1<\ldots<\alpha_d$ everywhere. But is this universally true for any valid choice of $\Sigma$? Moreover,  the tail asymptotics obtained in \Cref{prop:premain} also hint at statistical testing methods for the Gaussian copula assumption by comparing estimates of tail index parameters for different tail sets. {Finally, the case where the tails of the marginal distributions are regularly varying with different tail index parameters can also be derived under certain conditions from our results.} We leave these questions for the interested researchers to pursue.  

\bibliographystyle{imsart-nameyear}
\bibliography{bibfilenew}



\end{document}